\newtheoremstyle{mythm}                   %change thm environm. 
{6pt}%space above    
{6pt}%space below
{\it}%body font
{}%indent amount
{\bf}%theorem head font  
{.}%punctation after theorem head
{.5em}%space after thm head
{}%thm head spec
\newtheoremstyle{mydef}                   %change def environm. 
{6pt}%space above
{6pt}%space below
{}%body font
{}%indent amount
{\bf}%theorem head font 
{.}%punctation after theorem head
{.5em}%space after thm head
{}%thm head spec
\newtheoremstyle{myrem}                   %change rem environm. 
{6pt}%space above
{6pt}%space below
{}%body font
{}%indent amount
{\bf}%theorem head font
{.}%punctation after theorem head
{.5em}%space after thm head
{}%thm head spec
\theoremstyle{mythm}      
\newtheorem{theorem}{Theorem}[section]
\newtheorem{proposition}[theorem]{Proposition}
\newtheorem{lemma}[theorem]{Lemma}
\theoremstyle{mydef}
\theoremstyle{myrem}
\newtheorem{remark}[theorem]{Remark}
\numberwithin{equation}{section}
\numberwithin{equation}{section}
\newcommand{\C}{\mathbb{C}}
\newcommand{\Q}{\mathbb{Q}}
\newcommand{\Z}{\mathbb{Z}}
\newcommand{\R}{\mathbb{R}}
\newcommand{\GL}{\mathrm{\mathop{GL}}}
\newcommand{\SL}{\mathrm{\mathop{SL}}}
\newcommand{\ad}{\mathrm{\mathop{ad}}}
\newcommand{\diag}{\mathrm{\mathop{diag}}}
\newcommand{\gl}{\mathfrak{\mathop{gl}}}
\newcommand{\ssl}{\mathfrak{\mathop{sl}}}
\newcommand{\so}{\mathfrak{\mathop{so}}}
\newcommand{\ssp}{\mathfrak{\mathop{sp}}}
\newcommand{\g}{\mathfrak{g}}
\newcommand{\h}{\mathfrak{h}}
\renewcommand{\u}{\mathfrak{u}}
\newcommand{\z}{\mathfrak{z}}
\renewcommand{\a}{\mathfrak{a}}
\newcommand{\kk}{\mathfrak{k}}
\newcommand{\pp}{\mathfrak{p}}
\newcommand{\uu}{\mathfrak{u}}
\newcommand{\wG}{\widehat{G}}
\newcommand{\Dih}{{\rm Dih}}
\newcommand{\SmallMatrix}[1]{\text{\begin{tiny}${\arraycolsep=0.4\arraycolsep\ensuremath
{\begin{pmatrix}#1\end{pmatrix}}}$\end{tiny}}}
\renewcommand{\leq}{\leqslant}
\renewcommand{\geq}{\geqslant}
\subjclass[2000]{}
\newcommand{\mye}[1]{\text{\small$|#1\rangle$}}
\newcounter{ithmcount}
\newenvironment{iprf}{\begin{list}{{\rm
	\alph{ithmcount})}}{\usecounter{ithmcount}\labelwidth-5pt
      \leftmargin0pt \topsep3pt \itemsep1pt \parsep2pt}}{\qedhere\end{list}}
\newenvironment{ithm}{\begin{list}{{\rm \alph{ithmcount})}}{\usecounter{ithmcount}\labelwidth18pt
      \leftmargin18pt \topsep3pt \itemsep1pt \parsep2pt}}{\end{list}}
\begin{document}

%%%%%%%%%%%%%%%%%%%%%%%Stuff for Dynkin diag - copied from somewhere%%%

% Max' Tikz styles for dynkin diagrams
\def\DynkinNodeSize{1.5mm}
\def\DynkinArrowLength{1.5mm}
\tikzset{
% a diagram node
dnode/.style={
circle,
inner sep=0pt,
minimum size=\DynkinNodeSize,
fill=white,
draw},
middlearrow/.style={
decoration={markings,
mark=at position 0.6 with
%{\arrow[black]{angle 90};}
%{\arrow[black]{angle 60};}
%{\arrow[black]{stealth};}
{\draw (0:0mm) -- +(+135:\DynkinArrowLength); \draw (0:0mm) -- +(-135:\DynkinArrowLength);},
},
postaction={decorate}
},
leftrightarrow/.style={
decoration={markings,
mark=at position 0.999 with
{
\draw (0:0mm) -- +(+135:\DynkinArrowLength); \draw (0:0mm) -- +(-135:\DynkinArrowLength);
},
mark=at position 0.001 with
{
\draw (0:0mm) -- +(+45:\DynkinArrowLength); \draw (0:0mm) -- +(-45:\DynkinArrowLength);
},
},
postaction={decorate}
},
% single edge
sedge/.style={
},
% directed double edge
dedge/.style={
middlearrow,
double distance=0.5mm,
},
% directed triple edge
tedge/.style={
middlearrow,
double distance=1.0mm+\pgflinewidth,
postaction={draw}, % third line
},
% double edge with two arrows, for \tilde{A}_1 residues
infedge/.style={
leftrightarrow,
double distance=0.5mm,
},
}

%%%%%%%%%%%%%%%%%%%%%%%%%%%End stuff for Dyn diag%%%%%%%%%%%%%%%%%%%%%%%%%%

%\title{Fine classification of four qubit states under SLOCC operations}
\title{Classification of four qubit states and their stabilisers under SLOCC operations}
\subjclass[2000]{}
\author[H. Dietrich]{Heiko Dietrich}
\address[Dietrich, Origlia]{School of Mathematics, Monash University, Clayton VIC 3800, Australia}
\author[W. A. de Graaf]{Willem A.\ de Graaf}
\address[de Graaf]{Department of Mathematics, University of Trento, Povo (Trento), Italy}
\author[A. Marrani]{Alessio Marrani}
\address[Marrani]{Centro Ricerche "Enrico Fermi", Roma, Italy}
\author[M. Origlia]{Marcos Origlia}
\email{\rm heiko.dietrich@monash.edu, jazzphyzz@gmail.com, marcos.origlia@monash.edu, willem.degraaf@unitn.it}
\thanks{The first, second, and fourth author were supported by an Australian Research Council grant, identifier DP190100317.}
\keywords{}
\date{\today}

\begin{abstract}
We classify four qubit states under SLOCC operations, that is, we classify
the orbits of the group $\SL(2,\C)^4$ on
the Hilbert space $\mathcal{H}_4 = (\mathbb{C}^2)^{\otimes 4}$. We approach the
classification by realising this representation as a symmetric space of
maximal rank. We first describe general methods for classifying the orbits of
such a space. We then apply these methods to obtain the orbits in our special case, resulting in a  complete and irredundant classification of $\SL(2,\C)^4$-orbits on $\mathcal{H}_4$. It follows that an element of $(\mathbb{C}^2)^{\otimes 4}$ is
conjugate to an element of precisely 87 classes of elements. Each of these classes
either consists of one element or of a parametrised family of elements, and the elements in the same class all have equal stabiliser in $\SL(2,\C)^4$.  We also present a complete and irredundant classification of elements and stabilisers up to the action of ${\rm Sym}_4\ltimes\SL(2,\C)^4$ where ${\rm Sym}_4$ permutes the four tensor factors of
$(\mathbb{C}^2)^{\otimes 4}$. 
\end{abstract}
 
\maketitle

\section{Introduction}

\noindent Entanglement is a fundamental notion in Quantum Information Theory (QIT).
The beginning of the $\text{XXI}^{\text{st}}$ century has witnessed many efforts
and advances in understanding the nature of entanglement (see the review
papers \cite{ES,HHHH09}). Since entangled states lie at the core of
quantum-enhanced applications it is crucially important to know which of
these states are \textit{equivalent}, in the sense that they are capable of
performing the same QIT tasks almost equally well. Therefore, %the question of
the classification of the entanglement of pure multipartite quantum
states under the group of reversible Stochastic Local Quantum Operations
assisted by Classical Communication (SLOCC) is nowadays one of the most
prominent challenges in QIT~(see~\cite{HHHH09}).

Since entanglement is deeply related to the non-local properties of a state,
its intrinsic nature cannot be affected by local quantum operations,
implemented by the SLOCC group \cite{Ben,Dur}, which provides the most
general local operations that can be implemented without deteriorating the
quantum correlations shared by spatially separated physical systems. As
mentioned above, two states belonging to the same entanglement class would
be able to perform the same tasks, because one should be obtained with
nonzero probability from the other  using local invertible operations.
Group theoretically, SLOCC equivalence classes on $n$-qubit states are $\SL(2,\C)^n$-orbits in the space $\mathcal{H}_n = \C^2\otimes \cdots \otimes \C^2$
($n$ factors $\C^2$).

SLOCC classifications for $n=2$ and $n=3$ are easily determined, yielding 
two and six SLOCC orbits for 2- and 3-qubit states, respectively. In
particular, for what concerns the case of an entangled pure state of two
qubits ($n=2$), it is well-known that it can be converted to the singlet
state by SLOCC operations \cite{Lo-Popescu}. For what concerns three
entangled qubits ($n=3$), it was proved in a series of works \cite%
{Dur,Dur2,verstraete} that any state can be converted by SLOCC operations either to
the GHZ-state $\frac{1}{\sqrt{2}}\left( \left\vert 000\right\rangle
+\left\vert 111\right\rangle \right) $, or to the W-state $\frac{1}{\sqrt{3}}%
\left( \left\vert 001\right\rangle +\left\vert 010\right\rangle +\left\vert
001\right\rangle \right) $, thus yielding to two inequivalent ways of
entangling three qubits. In general, the GHZ
(Greenberger-Horne-Zeilinger)-state is considered as the state with the
genuine tripartite entanglement, whereas the W-state enjoys the peculiar
property of having the maximal expected amount of twopartite entanglement if
one party is traced out \cite{Dur}.

For $n$ qubits with $n\geq 4$ uncountably many SLOCC
classes arise \cite{Dur}. The case of four qubits ($n=4$) has been the
subject of a number of studies; without claim to completeness we mention
\cite{verstraete,djok,Cao-Wang,nilp,Buniy,Chen-Grassl,GA16,Lamata:2006b,Li:2007c,Wallach:2013}. Here we cannot review all these publications; we just mention the
following.
\begin{iprf}
\item[$\bullet$] Verstraete et al.\ \cite{verstraete} considered the orbits of $\SL(2,\C)^4$ on
  $\mathcal{H}_4$ and also allowed for permutations of the qubits, that is,
  they considered the action of $\mathcal{S} =\mathrm{Sym}_4\ltimes \SL(2,\C)^4
  $. Their main result is a list of nine classes such that each
  $\mathcal{S}$-orbit has a point in exactly one of the classes;  it
  may happen that different elements of the same class are $\mathcal{S}$-conjugate. In \cite{djok} this classification was again derived and corrected.
\item[$\bullet$]  Wallach \cite{Wallach:2013} also considered the 
  $\SL(2,\C)^4$-orbits in $\mathcal{H}_4$. His methods are based on 
  results of Kostant\ \& Rallis \cite{kora}. One of the main results is the
  statement that there are 90 ``types'' of orbits.
\end{iprf}

In this paper we present a classification of the orbits of $\SL(2,\C)^4$ on
$\mathcal{H}_4$. The method that we use is similar to the one employed by
Wallach \cite{Wallach:2013}. However, we also use concepts and methods
introduced by Vinberg \cite{vinberg}, and employ a similar scheme as used
by Vinberg\ \& Elashvili \cite{VE78}. The main idea is to realise the
representation of $\SL(2,\C)^4$ on $\mathcal{H}_4$ using a symmetric pair
of maximal rank corresponding to the simple Lie algebra of type D$_4$.
This yields a Jordan decomposition of the elements of $\mathcal{H}_4$, and
allows partitioning its elements and $\SL(2,\C)^4$-orbits into three
classes: semisimple, nilpotent and mixed. The nilpotent orbits can be classified
by general methods such as the ones described in \cite{vinberg79,graaf}. The semisimple orbits are
classified by exhibiting a Cartan subspace and studying the action of the
(finite) Weyl group on this space. The mixed orbits are classified by
listing the nilpotent orbits in the centraliser of a semisimple element.
More specifically we have the following:
\begin{iprf}
\item[$\bullet$] There are 31 nilpotent orbits with  representatives given in Table \ref{tabNP}. This has been proved in \cite{nilp} 
  by the Kostant-Sekiguchi correspondence; it  can also be derived using
  Vinberg's method of carrier algebras \cite{vinberg79}. In the
  remainder of this paper we will therefore not discuss the nilpotent case~further.
\item[$\bullet$] There are 10 parametrised classes of nonzero semisimple elements, as given
  in Table \ref{tabW}. Every semisimple element is $\SL(2,\C)^4$-conjugate to an element in precisely one of these classes. For each
  class we explicitly determine  a finite group $\Gamma$ with the property that
  two elements in the class are $\SL(2,\C)^4$-conjugate if and only if
  they are $\Gamma$-conjugate. Elements of different classes are not
  $\SL(2,\C)^4$-conjugate. Furthermore, the elements of a class all have the
  {\em same} stabiliser in $\SL(2,\C)^4$.
\item[$\bullet$] For each semisimple class we explicitly list representatives of the
  orbits of mixed type whose semisimple part comes from the given class, see Theorem \ref{thmME}.
  This amounts to listing the possible nilpotent parts up to the action of
  the centraliser of the semisimple part.
\end{iprf}

This yields the following theorem.

\begin{theorem}\label{thmMain}
There are 87 classes of elements of $\mathcal{H}_4$: 31 classes consist
of a single nilpotent element, 10 classes consist of semisimple elements, and
46  classes consist of mixed elements. Each element of $\mathcal{H}_4$
is $\SL(2,\C)^4$-conjugate to an element of precisely one class; elements of a class all have the same stabiliser in~$\SL(2,\C)^4$. 
\end{theorem}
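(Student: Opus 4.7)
The plan is to assemble Theorem~\ref{thmMain} by combining the three ingredient classifications already summarised in the introduction: the 31 nilpotent orbits in Table~\ref{tabNP}, the 10 families of nonzero semisimple orbits in Table~\ref{tabW}, and the 46 mixed classes listed in Theorem~\ref{thmME}. The adhesive that holds these three lists together is the Jordan decomposition inherited from realising $\mathcal{H}_4$ as the $(-1)$-eigenspace of an involution on a simple Lie algebra of type $\mathrm{D}_4$, i.e.\ as the isotropy part of a symmetric pair of maximal rank.

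First I would invoke this Jordan decomposition: every $v\in\mathcal{H}_4$ decomposes uniquely as $v=v_s+v_n$ with $v_s$ semisimple, $v_n$ nilpotent, and $v_s,v_n$ commuting in the ambient Lie algebra. Uniqueness together with $\SL(2,\C)^4$-equivariance of the decomposition shows that the type of $v$ (purely nilpotent, purely semisimple and nonzero, or genuinely mixed) is a conjugation invariant, and that two elements are $\SL(2,\C)^4$-conjugate if and only if there is a simultaneous conjugator for their semisimple and nilpotent parts. This observation alone ensures that the nilpotent, semisimple, and mixed classes will automatically be pairwise disjoint; so the only remaining question is irredundancy inside each of the three types.

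Next I would feed in the three sub-classifications. The nilpotent orbits are exactly the 31 classes of Table~\ref{tabNP}, with the zero vector lying among them. The nonzero semisimple elements are partitioned into the 10 parametrised classes of Table~\ref{tabW}, and the bulleted claim summarised before the theorem gives both the irredundancy across classes and the finite-group $\Gamma$-description of conjugacy inside each class. Finally, Theorem~\ref{thmME} enumerates, for each of the 10 semisimple classes, the $\SL(2,\C)^4$-orbits of mixed type whose semisimple part lies in that class, producing $46$ mixed classes in total. Adding up gives $31+10+46=87$, which is the count asserted by the theorem.

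The stabiliser assertion is the one point that needs a small argument beyond bookkeeping. Since $\Stab(v)=\Stab(v_s)\cap\Stab(v_n)$ for a Jordan-decomposed $v$, constancy of the stabiliser within a class follows provided that (a) for each semisimple class $\Stab(v_s)$ is constant along the parameter, which is the explicit content of Table~\ref{tabW}; and (b) for each mixed class the nilpotent representative $v_n$ is chosen uniformly in the centraliser of $v_s$, independent of the parameter, so that the intersection $\Stab(v_s)\cap\Stab(v_n)$ stays constant along the family. The main obstacle is therefore not in the present theorem but upstream, in ensuring that Theorem~\ref{thmME} is set up with parameter-uniform nilpotent data and with explicit representatives whose stabilisers can be read off; once this is in place, Theorem~\ref{thmMain} reduces to a Jordan-decomposition uniqueness check together with the count $31+10+46=87$.
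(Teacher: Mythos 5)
Your proposal is correct and follows essentially the same route as the paper, which assembles Theorem~\ref{thmMain} from the Jordan decomposition (Lemma~\ref{lem:jd}, Lemma~\ref{lemSE}), the nilpotent classification of Table~\ref{tabNP}, the semisimple classification of Theorem~\ref{thmSE}, and the mixed classification of Theorem~\ref{thmME}, with the count $1+2+3\cdot 4+3\cdot 6+13=46$ for the mixed classes. Your observation that the stabiliser claim reduces to $\Stab(v)=\Stab(v_s)\cap\Stab(v_n)$ together with parameter-independence of both $Z_{\widehat{G}}(v_s)$ on each $\h_{\Pi_i}^\circ$ and of the chosen nilpotent parts $n_{i,j}$ is exactly what the paper supplies via Lemma~\ref{lemCentSL} and the remark following Lemma~\ref{lemCent}.
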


In particular, our results yield the first complete and irredundant classification of the $\SL(2,\C)^4$-orbits on $\mathcal{H}_4$; this follows from Theorem \ref{thmSE} (semisimple), Theorem \ref{thmME} (mixed), Table \ref{tabNP} (nilpotent), together with Remarks \ref{remSym} and~\ref{GammaConjugacy}.  We also compare our classifications with those of Verstraete et al.\ \cite{verstraete} and
Chterental \& Djokovi\v{c} \cite{djok}, and we present the first complete and irredundant classification of $({\rm Sym}_4\ltimes \SL(2,\C)^4)$-orbits in $\mathcal{H}_4$, together with their stabilisers. 

In \cite{GKW} it is argued that in many contexts it is important to
determine the stabiliser (also called \emph{group of local symmetries})
of a given element in $\mathcal{H}_n$.
As a corollary to Theorem \ref{thmMain}, it follows that the stabiliser of any element of
$\mathcal{H}_4$ is conjugate in $\SL(2,\C)^4$ to one of 87 stabilisers.
In \cite{GKW} the orbits of $\psi,\phi\in\mathcal{H}_4$ are defined to 
have the same {\em type} if the stabilisers of $\psi$ and $\phi$ in
$\SL(2,\C)^4$ are conjugate in that group. Hence we conclude that there
are at most 87 types of orbits.

When we consider the action of ${\rm Sym}_4\ltimes \SL(2,\C)^4$
then every element in $\mathcal{H}_4$ is conjugate to an element of
precisely one of 27 classes. As mentioned above we
explicitly determine the stabilisers in $\SL(2,\C)^4$
for the elements in our \enlargethispage{6ex}
classification of $({\rm Sym}_4\ltimes \SL(2,\C)^4)$-orbits, see Tables \ref{tabZ}, \ref{tabMT}, and~\ref{tabNilZ}. From Table \ref{tabZ}, Row~1, it is seen that the stabiliser of a
generic element is a finite group of order 32. For $n$-qubits with $n\geq 5$  the
situation is completely different, as in those cases the stabiliser of a generic
element is trivial \cite{GKW}.

 %%%%%%%%%%%%%%%%%%%%%%%%%%%%%%%%%%%%%%%%

\vspace*{-1ex}

\subsection{Structure of this paper}
In Section \ref{secGT} we describe our general approach to classifying the orbits in
symmetric spaces of maximal rank. These are certain representations of
reductive algebraic groups that arise from a $\Z/2\Z$-grading of a semisimple
Lie algebra. In Section \ref{secOurSS} we show how the representation of
$\SL(2,\C)^4$ arises in this way. We then apply our methods to derive
a classification of the semisimple orbits, see Theorem \ref{thmSE}. The orbits
of mixed elements are determined in Section~\ref{secMixed}, see
Theorem \ref{thmME}. In Section~\ref{secDjok} we compare our classifications
with those of Verstraete et al.\ \cite{verstraete} and
Chterental \& Djokovi\v{c} \cite{djok}, and we present a complete
classification of $({\rm Sym}_4\ltimes \SL(2,\C)^4)$-orbits in $\mathcal{H}_4$. In Section \ref{secApps} we investigate the ring of invariants for our main
example. All explicit calculations have been done in GAP \cite{gap} using the GAP
packages SLA and Singular; the latter provides an interface to  the algebra
software Singular \cite{singular}.

%%%%%%%%%%%%%%%%%%%%%%%%%%%%%%%%%%%%%%%%%%%%%%%%%%%%%%%%%%%%%%%%%%%%%%%%%%%%%%%%%%%%%%%%%%%%%%%%%%%%%%%
%%%%%%%%%%%%%%%%%%%%%%%%%%%%%%%%%%%%%%%%%%%%%%%%%%%%%%%%%%%%%%%%%%%%%%%%%%%%%%%%%%%%%%%%%%%%%%%%%%%%%%%
%%%%%%%%%%%%%%%%%%%%%%%%%%%%%%%%%%%%%%%%%%%%%%%%%%%%%%%%%%%%%%%%%%%%%%%%%%%%%%%%%%%%%%%%%%%%%%%%%%%%%%%
%%%%%%%%%%%%%%%%%%%%%%%%%%%%%%%%%%%%%%%%%%%%%%%%%%%%%%%%%%%%%%%%%%%%%%%%%%%%%%%%%%%%%%%%%%%%%%%%%%%%%%%
%%%%%%%%%%%%%%%%%%%%%%%%%%%%%%%%%%%%%%%%%%%%%%%%%%%%%%%%%%%%%%%%%%%%%%%%%%%%%%%%%%%%%%%%%%%%%%%%%%%%%%%

\section{Orbits in symmetric spaces of maximal rank}\label{secGT}

\noindent We let $\g$ be a semisimple Lie algebra over $\C$ and suppose that 
$\g$ is furnished with a $\Z/2\Z$-grading, that is, $\g= \g_0\oplus \g_1$ with $[\g_i,\g_j]\leq \g_{i+j\bmod 2}$ for all $i,j$; in particular, $\g_0$ is a
subalgebra that acts on $\g_1$. It is also said that $(\g,\g_0)$ is a
{\em symmetric pair}. Associated with this grading is an automorphism
$\theta \colon \g\to \g$ of order~$2$ such that each $\g_i$ is the
$(-1)^i$-eigenspace of $\theta$.

Let $G$ be the adjoint group of $\g$, that is,
the identity component of the automorphism group of $\g$; the Lie algebra of
$G$ is $\ad_\g \g\cong \g$. Let $G_0$ be the connected algebraic subgroup of
$G$ with Lie algebra $\ad_\g \g_0$; note that  $G_0\leq G^\theta
=\{g\in G: \theta g=g\theta\}.$ The group $G_0$ acts naturally on
$\g_1$ and we are interested in listing the orbits of $G_0$ in $\g_1$; the study of these orbits was initiated  by Kostant \&  Rallis \cite{kora}. The group $G_0$ with its action on $\g_1$ is a special case of a $\theta$-group, a concept
introduced by Vinberg \cite{vinberg,vinberg79}, who also studied the orbits of $G_0$ on $\g_1$. Part of Vinberg's theory is covered by a recent book by Wallach \cite{wallach}; in the sequel we will mainly refer
to this book, although all cited results can also be found in the papers by Kostant \& Rallis and Vinberg.  

A {\em Cartan subspace} of the pair  $(\g,\g_0)$ is a subspace
of $\g_1$ maximal with respect to the property that its elements are commuting semisimple elements. By
\cite[Corollary 3.55]{wallach} any two Cartan subspaces are $G_0$-conjugate.
In particular, they have the same dimension, which is called the {\em rank} of
$(\g,\g_0)$. Here we assume that a Cartan subspace of $\g_1$ is also a Cartan
subalgebra of $\g$, that is, we assume that the rank of $(\g,\g_0)$
is equal to the rank of the root system of $\g$. Up to conjugacy, there exists
a unique symmetric pair $(\g,\g_0)$ of maximal rank, which can be constructed
as follows: The split real form $\g_\R$  with complexification
$\g$ has a Cartan decomposition $\g_\R = \kk_\R\oplus \pp_\R$, and letting $\g_0$ and $\g_1$ be the complexifications of $\kk_\R$ and $\pp_\R$, respectively, we obtain the symmetric pair of maximal rank. In Table \ref{tabMaxSp} we
list the symmetric spaces of maximal rank corresponding to the simple
complex Lie algebras.

\begin{table}[ht]\renewcommand\arraystretch{1.2}
\scalebox{0.88}{\begin{tabular}{r|c|c|c|c}
{\bf type} & ${\pmb{\g}}$ & $\pmb{\g_0}$ & $\pmb{\g_1}$ & {\bf degrees} \\
\hline
${\rm A}_{n-1}$ & $\ssl(n,\C)$ & $\so(n,\C)$ & $S_0^2{\bf n}$ & $2,3,4,\ldots,n$\\
${\rm B}_n$ & $\so(2n+1,\C)$ & $\so(n,\C)\oplus\so(n+1,\C)$ &
${\bf n}\otimes ({\bf n+1})$ & $2,4,6,\ldots,2n$\\
${\rm C}_n$ & $\ssp(2n,\C)$ & $\ssl(n,\C)$ & $S^2 {\bf n} \oplus \overline{S^2{\bf n}}$
& $2,4,6,\ldots,2n$\\
${\rm D}_n$ & $\so(2n,\C)$ & $\so(n,\C)\oplus \so(n,\C)$ & ${\bf n}\otimes {\bf n}$
& $n,2,4,6,\ldots,2n-2$\\
${\rm E}_6$ & ${\rm E}_6(\C)$ & $\ssp(8,\C)$ & $\wedge_0^4 {\bf 8}$ & $2,5,6,8,9,12$\\
${\rm E}_7$ & ${\rm E}_7(\C)$ & $\ssl(8,\C)$ & $\wedge^4 {\bf 8}$ & $2,6,8,10,12,14,18$\\
${\rm E}_8$ & ${\rm E}_8(\C)$ & $\so(16,\C)$ & $\underset{\text{semispinor}}{%
  \mathbf{128}}$ & $2,8,12,14,18,20,24,30$\\
${\rm F}_4$ & ${\rm F}_4(\C)$ & $\ssp(6,\C)\oplus \ssl(2,\C)$ & $(\wedge_0^3{\bf 6})\otimes
{\bf 2}$ & $2,6,8,12$\\
${\rm G}_2$ & ${\rm G}_2(\C)$ & $\ssl(2,\C)\oplus\ssl(2,\C)$ & $S^3{\bf 2}\otimes{\bf 2}$ &
$2,6$\\
\end{tabular}}\\[1ex]\caption{Symmetric spaces of maximal rank in simple Lie algebras defined over $\C$. The fourth column displays the
  structure of $\g_1$ as $\g_0$-module; here we denote an irreducible module
  by its dimension, and a notation like $\wedge_0^4 {\bf 8}$ indicates
  the quotient of $\wedge^4 {\bf 8}$ by the trivial 1-dimensional module. 
  The last column has the degrees of the homogeneous invariant polynomials that generate the invariant ring, cf. \cite[Table 1, p.~59]{humcox}}\label{tabMaxSp}.
\end{table}

Recall that  $x\in \g$ is semisimple (nilpotent) if the adjoint map $\ad x\colon \g\to \g$ is a semisimple (nilpotent) endomorphism. The study of the $G_0$-orbits in $\g_1$ starts with the following well-known lemma on the Jordan decomposition;  we refer to \cite[Proposition 3]{kora} for a proof.

\begin{lemma}\label{lem:jd}
If $x\in\g_1$, then $x=s+n$ for unique semisimple $s\in \g_1$ and nilpotent
$n\in \g_1$ with  $[n,s]=0$. 
\end{lemma}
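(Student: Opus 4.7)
The plan is to invoke the abstract Jordan decomposition for semisimple Lie algebras and then use the automorphism $\theta$ associated to the grading to force the summands into $\g_1$. Concretely, since $\g$ is semisimple, any $x\in\g$ admits a unique decomposition $x=s+n$ with $s\in\g$ semisimple, $n\in\g$ nilpotent, and $[s,n]=0$; this is standard (see e.g.\ \cite{wallach}). So existence and uniqueness of the decomposition in $\g$ itself is free; the only thing to verify is that, when $x\in\g_1$, the two summands also lie in $\g_1$.

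For this I would exploit the naturality of the Jordan decomposition under Lie algebra automorphisms. Let $\theta\colon\g\to\g$ be the order-two automorphism whose $(-1)^i$-eigenspace is $\g_i$. Applying $\theta$ to $x=s+n$ yields
\[
\theta(x)=\theta(s)+\theta(n),
\]
where $\theta(s)$ is semisimple, $\theta(n)$ is nilpotent (since $\ad\theta(y)=\theta\circ\ad y\circ\theta^{-1}$ is conjugate to $\ad y$ for any $y\in\g$), and $[\theta(s),\theta(n)]=\theta[s,n]=0$. Hence $\theta(s)+\theta(n)$ is the Jordan decomposition of $\theta(x)$. On the other hand, $x\in\g_1$ gives $\theta(x)=-x=(-s)+(-n)$, and this is again a Jordan decomposition of $-x$. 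Uniqueness of the Jordan decomposition then forces $\theta(s)=-s$ and $\theta(n)=-n$, so both $s$ and $n$ lie in the $(-1)$-eigenspace $\g_1$.

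The uniqueness clause within $\g_1$ is automatic from the uniqueness in $\g$, since any decomposition $x=s'+n'$ with $s'\in\g_1$ semisimple, $n'\in\g_1$ nilpotent, and $[s',n']=0$ is in particular a Jordan decomposition of $x$ in $\g$, hence must agree with $s+n$. There is no real obstacle here; the only subtlety worth highlighting is the invariance of semisimplicity and nilpotency under $\theta$, which is immediate because $\ad\theta(y)$ and $\ad y$ are conjugate endomorphisms of $\g$ and hence share their Jordan type.
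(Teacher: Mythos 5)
Your proof is correct: the paper itself gives no argument but simply cites Kostant--Rallis \cite[Proposition 3]{kora}, and your derivation --- take the abstract Jordan decomposition in $\g$, apply $\theta$, and use uniqueness together with $\theta(x)=-x$ to force $\theta(s)=-s$ and $\theta(n)=-n$ --- is exactly the standard argument behind that citation. All steps (invariance of semisimplicity and nilpotency under automorphisms, and uniqueness in $\g$ implying uniqueness in $\g_1$) are justified as you state them.
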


Accordingly, the $G_0$-orbits in $\g_1$ split into three classes: the nilpotent
orbits (that consist entirely of nilpotent elements), the semisimple orbits
(that consist of semisimple elements) and the mixed orbits (consisting of
elements that are neither semisimple nor nilpotent). Methods for listing the
nilpotent orbits have been developed by Vinberg \cite{vinberg79} and de Graaf \cite{graaf11},  so we will not comment on that here. Instead, we will describe
how to list the semisimple and mixed orbits. Our methods for that are based on (and very similar to) methods for an analogous problem developed by Vinberg \& {\`E}la{\v{s}}vili \cite{VE78}. We note that in \cite{antonyan} and \cite{antelash} the authors
study the orbits in the symmetric spaces of maximal rank of types $E_7$ and $E_8$, respectively, using methods that are also based on the approach in  \cite{VE78}. 

We continue with a subsection that contains some results on Weyl groups. Subsequently, we discuss  semisimple orbits and mixed orbits. Throughout, we use standard notation for Lie algebras and their related combinatorial data (Cartan subalgebras, root systems, Weyl groups, etc), and
we refer to the books of Erdmann \& Wildon \cite{erdmann} or
Humphreys \cite{hum} for more details and background information.

%%%%%%%%%%%%%%%%%%%%%%
\subsection{Root subsystems}\label{secWeyl}
\noindent Let $\g$ be a semisimple complex Lie algebra with Cartan subalgebra
$\h$ and corresponding root system $\Phi$ and Weyl group $W$. We write
$\g=\h\oplus\bigoplus_{\alpha\in\Phi}\g_\alpha$ for the corresponding root space
decomposition. A subset  $\Pi\subseteq \Phi$ is a \emph{root subsystem} if
for $\alpha,\beta\in \Pi$ we have $-\alpha\in \Pi$ and, if $\alpha+\beta
\in \Phi$, then $\alpha+\beta\in \Pi$. For $\alpha\in \Phi$ let $s_\alpha\in W$
be the corresponding reflection. The group $W$ acts on $\h$ by $s_\alpha(h) =
h-\alpha(h)h_\alpha$, where $h_\alpha$ is the unique element of $[\g_\alpha,\g_{-\alpha}]\leq \h$ with $\alpha(h_\alpha)=2$ (see \cite[Remark~2.9.9]{graaf}). If $w\in W$, $h\in \h$, and $\alpha\in\Phi$, then we sometimes abbreviate $wh=w(h)$ and $w\alpha=w(\alpha)$. We have
the following property
\begin{equation}\label{eq:weylact1}
  \alpha(s_\beta(h)) = s_\beta(\alpha)(h) \quad \text{ for all } \alpha,\beta\in\Phi
  \text{ and } h\in \h,
\end{equation}
which implies that
\begin{equation}\label{eq:weylact2}
w(\alpha)(h) = \alpha(w^{-1}(h))  \quad \text{ for all } \alpha\in\Phi, w\in W
  \text{ and } h\in \h.
\end{equation}
For $p\in \h$ we define $\Phi_p$ to be the annihilator of $p$ in $\Phi$, that is,\[\Phi_p=\{\alpha\in\Phi : \alpha(p)=0\}.\]
It is clear that $\Phi_p$ is a root subsystem of $\Phi$, but not all root
subsystems arise in this way. The next lemma gives a criterion to decide whether a root subsystem is of the form $\Phi_p$ for some $p\in\h$; recall that a root subsystem $\Psi\subseteq \Phi$
is {\em complete} if it is not properly contained in a root subsystem of
$\Phi$ of the same rank.

\begin{lemma}\label{lem:complete}
Let $\Psi\subseteq \Phi$ be a root subsystem. There exists $p\in \h$ with
$\Psi=\Phi_p$ if and only if $\Psi$ is complete.
\end{lemma}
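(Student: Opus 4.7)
The plan is to prove both directions by a short linear-algebra argument centred on a single reformulation. Let $V\subseteq\h^*$ denote the $\C$-span of $\Psi$ and let $W=V^\perp\subseteq\h$ be its annihilator. The first step is to verify that $\Phi\cap V$ is itself a root subsystem of $\Phi$, containing $\Psi$ and of the same rank as $\Psi$: closure under negation and under permissible sums is immediate from $V$ being a linear subspace, and the inclusions $\Psi\subseteq\Phi\cap V\subseteq V$ force the spans to coincide. It follows that ``$\Psi$ is complete'' and ``$\Psi=\Phi\cap V$'' mean the same thing, since any strictly larger subsystem of the same rank is necessarily contained in $V$, hence in $\Phi\cap V$.

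With this reformulation in hand, the forward implication is immediate: if $\Psi=\Phi_p$ and $\alpha\in\Phi\cap V$, then writing $\alpha$ as a $\C$-linear combination of elements of $\Psi$ and evaluating at $p$ shows $\alpha(p)=0$, so $\alpha\in\Phi_p=\Psi$; hence $\Psi=\Phi\cap V$ and $\Psi$ is complete.

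For the converse, suppose $\Psi$ is complete, so $\Psi=\Phi\cap V$. Every $\alpha\in\Phi\setminus\Psi$ then lies outside $V=W^\perp$, so its restriction $\alpha|_W$ is a non-zero linear functional and $\ker(\alpha|_W)\subsetneq W$ is a proper subspace. Since $\Phi\setminus\Psi$ is finite, the union of these finitely many proper subspaces cannot exhaust $W$, so one can pick $p$ in the complement. By construction, $\alpha(p)\neq 0$ for every $\alpha\in\Phi\setminus\Psi$, while $\alpha(p)=0$ for $\alpha\in\Psi$ because $p\in W=V^\perp$; thus $\Psi=\Phi_p$.

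The only substantive step is the reformulation in the first paragraph; once one recognises that completeness is exactly the maximality condition $\Psi=\Phi\cap V$, both directions collapse into the slogan ``take $p$ in generic position inside the annihilator $V^\perp$''. I anticipate no real obstacle beyond being careful that spans and annihilators are taken over $\C$ (or equivalently over $\R$ on the real form $\h_\R$ spanned by the roots) and that finiteness of $\Phi$ is what makes the avoidance-of-hyperplanes argument work.
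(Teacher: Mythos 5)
Your proof is correct and follows essentially the same route as the paper's: both reduce completeness to the condition $\Psi=\Phi\cap\mathrm{span}(\Psi)$, prove the forward direction by evaluating a linear combination at $p$, and prove the converse by choosing $p$ in the annihilator of $\Psi$ avoiding the finitely many proper subspaces cut out by the roots outside $\Psi$. The only cosmetic difference is that you span over $\C$ where the paper spans over $\Q$, which does not affect the argument.
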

 
\begin{proof}
We first show that a  root subsystem $\Pi\subseteq \Phi$ is complete if and
only if $V_\Pi\cap\Phi = \Pi$, where $V_\Pi$ is the $\Q$-space spanned by
$\Pi$: If  $\Pi$ is complete, then  $V_\Pi\cap \Phi=\Pi$ since $V_\Pi\cap\Phi$
is a root subsystem of $\Phi$ containing $\Pi$ of the same rank as $\Pi$. For
the converse suppose that $V_\Pi\cap \Phi=\Pi$. If $\Pi'\subseteq \Phi$ is a
root subsystem containing $\Pi$ and of the same rank as $\Pi$, then $V_{\Pi'}$
contains $V_\Pi$ and both spaces are of the same dimension, hence they are
equal. Thus, $\Pi'$ is contained in $V_{\Pi'}\cap \Phi = V_\Pi\cap \Phi = \Pi$,
so  $\Pi'=\Pi$. 
  
Suppose that $\Psi = \Phi_p$. If $\beta \in V_{\Psi} \cap\Phi$ then  $\beta$ is
a linear combination of elements of $\Phi_p$, hence $\beta(p)=0$ and
$\beta\in\Phi_p=\Psi$. It follows that $\Psi$ is complete. For the converse, suppose that $\Psi$ is complete. If $\Psi$ and $\Phi$ have equal rank,  then $\Psi=\Phi$ and $\Psi=\Phi_0$.
Now suppose that the rank $s$ of  $\Psi$ is  less then the rank of $\Phi$. 
Define $\uu = \{ p\in\h : \alpha(p)=0 \text{ for all } \alpha\in\Psi\}$
and $\uu^\circ = \{ p\in \uu : \beta(p)\neq 0 \text{ for all }
\beta\in\Phi\setminus\Psi\}$; note that $\dim \uu = \dim \h -s>0$.
If $\beta\in \Phi\setminus\Psi$, then $\beta$ is not contained in $V_{\Psi}$ by
our claim above. Thus, the space spanned by $\beta$ and $\Psi$ has dimension
$s+1$ and so $\{ u\in \uu : \beta(u) = 0\}$ has dimension $\dim \h -s-1$. 
This shows that $\beta$ is nonzero on $\uu$, hence the kernel of every
$\beta\in\Phi\setminus\Psi$ on $\uu$ has codimension $1$. Since any finite
union of codimension $1$ subspaces of $\uu$ does not cover $\uu$ it follows
that there is some $p\in\uu$ with $\beta(p)\ne 0$ for all
$\beta\in\Phi\setminus\Psi$, thus  $\uu^\circ\neq\emptyset$. Furthermore,
for any $p\in \uu^\circ$ we have $\Psi=\Phi_p$.
\end{proof}

For a root subsystem $\Psi\subseteq\Phi$ we define \[\h_\Psi^\circ = \{p\in \h : \Phi_p=\Psi\};\]
note that this is the set of all $p\in\h$ such that $\alpha(p)=0$ for every $\alpha\in \Psi$ and $\beta(p)\ne0$ for every $\beta\in\Phi\setminus\Psi$.

\begin{lemma}\label{lem:hpsi}
Let $\Psi,\Pi\subseteq \Phi$ be root subsystems and $u\in W$. Then $\Pi=u\Psi$
if and only if $\h_\Pi^\circ = u\h_\Psi^\circ$.  
\end{lemma}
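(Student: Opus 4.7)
The plan is to reduce both directions to the single equivariance identity
\[ \Phi_{u(p)} = u\,\Phi_p \qquad \text{for all } p\in \h \text{ and } u\in W, \]
which would be the core of the proof. To derive it, I would apply formula~\eqref{eq:weylact2} with $w = u^{-1}$ to obtain $\alpha(u(h)) = (u^{-1}\alpha)(h)$ for every root $\alpha$ and every $h\in\h$; in particular, $\alpha$ annihilates $u(p)$ if and only if $u^{-1}\alpha$ annihilates $p$, that is, $\alpha\in\Phi_{u(p)}$ iff $u^{-1}\alpha\in\Phi_p$ iff $\alpha\in u\Phi_p$.

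For the forward implication, I would suppose $\Pi = u\Psi$ and pick any $p\in\h_\Psi^\circ$. Since $\Phi_p=\Psi$ by definition, the identity above yields $\Phi_{u(p)}=u\Psi=\Pi$, so that $u(p)\in\h_\Pi^\circ$, and hence $u\,\h_\Psi^\circ\subseteq \h_\Pi^\circ$. The reverse inclusion follows symmetrically by applying the same argument to $u^{-1}$ together with the relation $\Psi=u^{-1}\Pi$. The case $\h_\Psi^\circ=\emptyset$ is vacuous: by Lemma~\ref{lem:complete}, $\Psi$ is then not complete, so $u\Psi=\Pi$ is not complete either, and both sides of the claimed equality are empty.

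For the converse, I would assume $\h_\Pi^\circ = u\,\h_\Psi^\circ$ and pick $p\in\h_\Psi^\circ$ (the degenerate case $\h_\Psi^\circ=\emptyset$ being vacuous in the setting where the lemma is applied, namely to complete root subsystems). Then $u(p)\in\h_\Pi^\circ$, so $\Pi=\Phi_{u(p)}$ by definition of $\h_\Pi^\circ$, and the equivariance identity gives $\Phi_{u(p)}=u\Phi_p=u\Psi$, as required. I do not expect a real obstacle beyond establishing the equivariance identity; the only mildly delicate point is the empty-set case, which is cleanly resolved by the completeness criterion of Lemma~\ref{lem:complete}.
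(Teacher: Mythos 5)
Your proof is correct and follows essentially the same route as the paper: both directions rest on the $W$-equivariance supplied by \eqref{eq:weylact2}, which you package as the identity $\Phi_{u(p)}=u\Phi_p$ while the paper verifies $\beta(uh)=\alpha(h)$ root by root for $\beta=u\alpha$. Your explicit handling of the degenerate case $\h_\Psi^\circ=\emptyset$ via Lemma~\ref{lem:complete} is a welcome precaution that the paper's converse direction leaves implicit.
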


\begin{proof}
First assume that $\Pi=u\Psi$, and note that   $u(\Phi\setminus \Psi)=\Phi\setminus\Pi$.
Let $h\in \h_\Psi^\circ$. If $\beta\in \Pi$, then 
$\beta = u\alpha$ for some $\alpha\in \Psi$, and \eqref{eq:weylact2} shows that
$\beta(uh) = (u\alpha)(uh) = \alpha(h)=0$. If $\beta\in \Phi\setminus\Pi$, then $\beta = u\alpha$ for
some $\alpha\in \Phi\setminus\Psi$, and \eqref{eq:weylact2}  yields $\beta(uh) = \alpha(h)\neq 0$. This shows that $uh\in \h_\Pi^\circ$.
Conversely, let $h\in \h_\Pi^\circ$. Since $\Psi = u^{-1}\Pi$, the previous argument shows that  $u^{-1} h\in \h_\Psi^\circ$, so  $h\in u\h_\Psi^\circ$. Thus,  $\h_\Pi^\circ = u\h_\Psi^\circ$, as claimed.

Now suppose that $\h_\Pi^\circ = u\h_\Psi^\circ$. The first part of the
proof shows that $u\h_\Psi^\circ= h_{u\Psi}^\circ$, so our assumption is that  $\h_\Pi^\circ= \h_{u\Psi}^\circ$. The definition of $\h_{\Pi}^\circ$ immediately implies that $u\Psi=\Pi$, as claimed.
\end{proof}

By definition, each $p\in \h$ lies in $\h_{\Phi_p}^\circ$. Thus, if $\Phi_1,\ldots,\Phi_r$ are, up to $W$-conjugacy, all the complete root subsystems of $\Phi$ (including $\emptyset$ and $\Phi$ itself), then every $W$-orbit in $\h$ has a point in a unique set $\h_{\Phi_i}^\circ$.

For a fixed complete root subsystem $\Psi\subseteq\Phi$ we now  characterise the $W$-conjugacy of elements in $\h_\Psi^\circ$. The proof of the next lemma follows well-known ideas (see, for example, \cite[\S 1.12]{humcox}), however, we could not find an exact reference in the literature.

\begin{lemma}\label{lem:stab} Let $\Delta = \{\alpha_1,\ldots,
  \alpha_\ell\}$ be a basis of simple roots of $\Phi$.
  \begin{ithm}
  \item Every $p\in \h$ is $W$-conjugate to an element in $C = \{ h\in \h :
    \alpha_i(h) \geq 0 \text{ for all } i\in\{1,\ldots,\ell\}\}$ where we
    write $z>0$ for a complex number  $z = x+\imath y$ with $x,y\in \R$ if
    either $x>0$, or $x=0$ and $y>0$. 
  \item If $p\in \h$, then the stabiliser $W_p=\{w\in W \colon w(p)=p\}$
    is  generated by $\{s_\alpha:\alpha\in \Phi_p\}$.
    \end{ithm}
\end{lemma}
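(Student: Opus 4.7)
For part (a), I plan to adapt the classical argument that every point is $W$-conjugate into the fundamental Weyl chamber, using only the fact that the stated order on $\C$ is translation-invariant and preserved by multiplication by positive reals. For $h\in\h$ set $N(h) = \{\alpha\in\Phi^+ : \alpha(h) < 0\}$, a finite set. I will pick $w_0\in W$ minimising $|N(w_0 p)|$ and show the minimum is $0$. Otherwise, expanding any $\alpha\in N(w_0 p)$ as a non-negative integer combination of simple roots and using compatibility of the order with addition, some simple root $\alpha_j$ must satisfy $\alpha_j(w_0 p) < 0$. Combining the standard fact that $s_{\alpha_j}$ permutes $\Phi^+\setminus\{\alpha_j\}$ and sends $\alpha_j$ to $-\alpha_j$ with \eqref{eq:weylact2}, I then obtain $|N(s_{\alpha_j} w_0 p)| = |N(w_0 p)| - 1$, contradicting the choice of $w_0$. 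Consequently $\alpha(w_0 p) \geq 0$ for every $\alpha\in\Phi^+$, in particular for each simple root, so $w_0 p\in C$.

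For part (b), my plan is to reduce to Steinberg's theorem for finite real reflection groups. I will write $p = p_R + \imath p_I$ with $p_R,p_I$ in the real span $\h_\R$ of the coroots $\{h_\alpha : \alpha\in\Phi\}$, and exploit that $W$ acts on $\h$ by $\C$-linear extension of its action on $\h_\R$. Thus $w\in W$ fixes $p$ if and only if it fixes both $p_R$ and $p_I$, so $W_p$ is precisely the pointwise stabiliser of the subset $\{p_R,p_I\}\subseteq\h_\R$. Steinberg's theorem \cite[Theorem~1.12]{humcox} then expresses this stabiliser as the subgroup generated by those reflections $s_\alpha\in W$ that fix both $p_R$ and $p_I$; and such reflections are exactly those with $\alpha(p_R) = \alpha(p_I) = 0$, i.e.\ $\alpha\in\Phi_p$.

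The only genuine subtlety in (a) will be verifying that replacing the real order by the stated lexicographic order on $\C$ does not break the positivity bookkeeping. It does not, since the order is compatible with addition and with multiplication by non-negative reals, and these are the only features of the order used. For (b), Steinberg's theorem is invoked as a black box, so no further combinatorial work is needed; the substance lies in recognising $W_p$ as a pointwise stabiliser on the real form $\h_\R$.
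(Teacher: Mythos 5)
Your proof is correct. Part a) is essentially the paper's argument: both descend into the chamber $C$ by applying a simple reflection at a wall where the value is negative; the paper terminates by asserting that the resulting sequence consists of distinct elements of the finite $W$-orbit, while you terminate by showing that $|N(wp)|$ strictly drops at each step — the same idea with slightly cleaner bookkeeping (your count argument in fact makes the paper's terse distinctness claim unnecessary). Part b) is where you genuinely diverge. The paper proves the point-stabiliser statement from scratch over $\C$: it first treats $p\in C$ via the standard reduced-expression argument (each prefix $p_j$ of a reduced word $w=s_{i_1}\cdots s_{i_t}\in W_p$ satisfies $p_{j-1}=p_j-c_{j-1}h_{i_{j-1}}$ with $c_{j-1}\geq 0$, so $w(p)=p$ forces all $c_{j-1}=0$), and then transfers to arbitrary $p$ by conjugating into $C$. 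You instead split $p=p_R+\imath p_I$ along the $W$-stable real form $\h_\R$ spanned by the coroots, identify $W_p$ with the pointwise stabiliser of the two-point set $\{p_R,p_I\}$, and invoke Steinberg's theorem for pointwise stabilisers of subsets. This is shorter and removes the complex lexicographic order from part b) entirely, at the cost of importing the subset version of the stabiliser theorem as a black box — a strictly stronger statement than the point-stabiliser case, though it is classical and is recorded in \cite[\S 1.12]{humcox}; the paper's route is self-contained precisely because it only ever uses the point case, which is why it must carry the complex order through the reduced-word argument. When writing this up, do state explicitly the two facts your reduction rests on: the roots take real values on $\h_\R$, so $\alpha(p)=0$ if and only if $\alpha(p_R)=\alpha(p_I)=0$; and every reflection in $W$ equals $s_\alpha$ for some $\alpha\in\Phi$, so Steinberg's generating set is exactly $\{s_\alpha:\alpha\in\Phi_p\}$.
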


\begin{proof}
Throughout the proof we abbreviate $h_i = h_{\alpha_i}$.     
  \begin{iprf}
  \item This is standard: We construct a sequence $k_1=p,k_2,k_3\ldots$ of
    $W$-conjugate elements until we find some $k_m\in C$. If $k_n$ is defined,
    but $k_n\notin C$, then $\alpha_i(k_n)<0$ for some $i$, and we set
    $k_{n+1}=s_i(k_n)=k_n+c_ih_i$ with $c_i=-\alpha_i(k_n)>0$. Thus, by
    construction, all elements in the sequence $k_1,k_2,\ldots$ are distinct;
    since the $W$-orbit of $p$ is finite, we will eventually construct an
    element $k_m\in C$.

\item We first show that if $p\in C$, then $W_p$ is generated by the $s_{\alpha_i}$
such that $\alpha_i(p)=0$. By definition of $C$, we have $\alpha(p)\geq 0$ for
every positive root and $\alpha(p)\leq 0$ for every negative one. Now let
$w=s_{i_1}\cdots s_{i_t}\in W_p$ be a reduced expression; the claim follows if
each $s_{i_k}\in W_p$. Write $p_{t+1}=p$ and $p_j = s_{i_j}\cdots s_{i_t}(p)$ for
$j\in\{1,\ldots,t\}$. It follows from \eqref{eq:weylact1} that
\[(\ast)\quad\alpha_{i_{j-1}}(p_j) = s_{i_t}\cdots s_{i_j}(\alpha_{i_{j-1}})(p)
\quad \text{ for all } j\in\{2,\ldots,t+1\}.\]
By \cite[Corollary~10.2]{hum}, each $s_{i_t}\ldots s_{i_{j-1}}
(\alpha_{i_{j-1}})$ is a negative root, so
$s_{i_t}\ldots s_{i_{j-1}}(\alpha_{i_{j-1}})(p)\leq 0$ for $p\in C$; since
$s_{i_{j-1}}(\alpha_{i_{j-1}})=-\alpha_{i_{j-1}}$, we deduce from $(\ast)$ that
$c_{j-1}=\alpha_{i_{j-1}}(p_j)$ satisfies $c_{j-1}\geq 0$. This shows that
$p_{j-1}=s_{i_{j-1}}(p_j) = p_j - c_{j-1}h_{i_{j-1}}$ with $c_{j-1}\geq 0$. Since this
holds for every $j$, the equality $w(p)=p$ implies that all $c_{j-1}=0$,
showing that each $p_j=p$, hence $\alpha_{i_{j-1}}(p)=0$, and so each $s_{i_{j-1}}\in W_p$.

Now let $p\in \h$ and choose $w\in W$ with $w(p)\in C$. The above shows that
$W_p=w^{-1}W_{w(p)}w$ is generated by  $\{s_{w^{-1}(\alpha)} :
\alpha\in \Phi_{w(p)}\}$; here we use  $w^{-1}s_\alpha w=s_{w^{-1}(\alpha)}$, see
\cite[Lemma 9.2]{hum}. On the other hand, $\Phi_{w(p)}=w(\Phi_p)$ by \eqref{eq:weylact2}, so $W_p$ is generated by $\{s_{w^{-1}(\alpha)} :\alpha\in w(\Phi_p)\}=\{s_\alpha:\alpha\in \Phi_p\}$.
\end{iprf}
\end{proof}

For a root subsystem $\Psi\subseteq \Phi$ we define \[W_\Psi=\langle s_\alpha : \alpha\in \Psi\rangle\quad\text{and}\quad \Gamma_\Psi = N_W(W_\Psi)/W_\Psi.\]

Let $\Psi\subseteq\Phi$ be a complete subsystem. The previous lemma shows that  the stabiliser  $W_p$ of  $p\in \h_\Psi^\circ$ in $W$ is generated by all $s_\alpha$ with $\alpha\in \Phi_p$. By definition, $\Phi_p=\Psi$, and so  $W_p=W_\Psi$. In particular, for every $w\in N_W(W_\Psi)$ we have $wW_pw^{-1} = W_p$, and  if $q=wp$, then $W_q=W_p$, or  $W_{\Phi_q} = W_{\Phi_p}$, which implies that $\Phi_q=\Phi_p=\Psi$, and so $q\in \h_\Psi^\circ$. We conclude that $\Gamma_\Psi$ acts naturally on $\h_\Psi^\circ$ and we have the following:
   
\begin{proposition}\label{prop:gampsi}
Let $\Psi\subseteq\Phi$ be a complete subsystem. Two elements $p,q\in \h_\Psi^\circ$ are $W$-conjugate if and only if they are $\Gamma_\Psi$-conjugate.
\end{proposition}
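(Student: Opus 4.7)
The plan is to prove the two directions separately. For the $\Gamma_\Psi$-conjugacy $\Rightarrow$ $W$-conjugacy direction, I would simply observe that the action of $\Gamma_\Psi$ on $\h_\Psi^\circ$ is induced by the action of $N_W(W_\Psi)\leq W$: any $\Gamma_\Psi$-orbit is contained in a single $W$-orbit. Note that this uses the fact, already established in the paragraph preceding the statement, that $W_\Psi$ fixes $\h_\Psi^\circ$ pointwise (immediate from $s_\alpha(h)=h-\alpha(h)h_\alpha=h$ for $\alpha\in\Psi$ and $h\in\h_\Psi^\circ$), so the quotient $\Gamma_\Psi$ really does act.

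For the converse, assume that $p,q\in\h_\Psi^\circ$ and that $q=wp$ for some $w\in W$. The key point is to show that $w\in N_W(W_\Psi)$, from which it follows that the class $wW_\Psi\in\Gamma_\Psi$ sends $p$ to $q$. First I would use \eqref{eq:weylact2} to compute $\Phi_q$ in terms of $\Phi_p$: for any $\alpha\in\Phi$,
\[
\alpha(q)=\alpha(wp)=(w^{-1}\alpha)(p),
\]
so $\alpha\in\Phi_q$ if and only if $w^{-1}\alpha\in\Phi_p$, i.e.\ $\Phi_q=w\Phi_p$. Since $p,q\in\h_\Psi^\circ$ we have $\Phi_p=\Phi_q=\Psi$ by definition, and therefore $w\Psi=\Psi$.

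It then remains to convert the set-theoretic equality $w\Psi=\Psi$ into the group-theoretic statement $wW_\Psi w^{-1}=W_\Psi$. Using the standard relation $ws_\alpha w^{-1}=s_{w\alpha}$ (e.g.\ \cite[Lemma~9.2]{hum}), conjugation by $w$ sends the generating set $\{s_\alpha:\alpha\in\Psi\}$ of $W_\Psi$ to $\{s_{w\alpha}:\alpha\in\Psi\}=\{s_\beta:\beta\in w\Psi\}=\{s_\beta:\beta\in\Psi\}$, so $wW_\Psi w^{-1}=W_\Psi$ and $w\in N_W(W_\Psi)$, as required. Alternatively, one could invoke Lemma~\ref{lem:hpsi} with $\Pi=\Psi$ to deduce $w\Psi=\Psi$ directly, but the computation above via \eqref{eq:weylact2} is immediate and self-contained. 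I do not anticipate any real obstacle here: the whole argument rests on the identifications $\Phi_p=\Psi$ (definition of $\h_\Psi^\circ$) and $\Phi_{wp}=w\Phi_p$ (a direct consequence of \eqref{eq:weylact2}), and is essentially a bookkeeping exercise once the action of $\Gamma_\Psi$ on $\h_\Psi^\circ$ has been set up.
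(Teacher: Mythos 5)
Your proof is correct. It has the same overall shape as the paper's: the only substantive content is the implication ``$q=wp$ for some $w\in W$ implies $w\in N_W(W_\Psi)$'', and both arguments establish exactly that. The difference is in how you get there. The paper transports the \emph{stabiliser}: it invokes the equality $W_p=W_\Psi$ for all $p\in\h_\Psi^\circ$ (established just before the proposition via Lemma~\ref{lem:stab}b)) and then writes $W_\Psi=W_q=wW_pw^{-1}=wW_\Psi w^{-1}$, which is a one-line proof given that preparation. You instead transport the \emph{annihilator subsystem}: $\Phi_q=w\Phi_p$ via \eqref{eq:weylact2}, hence $w\Psi=\Psi$, hence $wW_\Psi w^{-1}=W_\Psi$ via $ws_\alpha w^{-1}=s_{w\alpha}$. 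Your route is self-contained and avoids Lemma~\ref{lem:stab}b) for this implication (it is essentially the first half of Lemma~\ref{lem:hpsi}, as you note), at the cost of one extra step converting $w\Psi=\Psi$ into normalisation of $W_\Psi$; the paper's route is shorter but leans on the nontrivial fact that the stabiliser $W_p$ is generated by the reflections in $\Phi_p$. Both are fine, and your justification that $\Gamma_\Psi$ genuinely acts on $\h_\Psi^\circ$ (that $W_\Psi$ fixes $\h_\Psi^\circ$ pointwise) is the right observation for the easy direction.
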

 
\begin{proof}
If $q=wp$ with $w\in W$, then $W_q=wW_pw^{-1}$; since $W_q=W_p=W_\Psi$, we have $w\in N_W(W_\Psi)$, so  $p$ and $q$ are $\Gamma_\Psi$-conjugate. The converse is obvious. 
\end{proof}

%%%%%%%%%%%%%%%%%%%%%%%

\subsection{Semisimple orbits}\label{secZ2}

\noindent We revert back to the set up from the start of the section, that is
we consider a symmetric pair $(\g,\g_0)$ of maximal rank. We are interested
in listing the semisimple $G_0$-orbits in $\g_1$. Recall that a Cartan subspace of $\g$ is a maximal subspace
of $\g_1$ consisting of commuting semisimple elements, and any two Cartan subspaces are $G_0$-conjugate, see 
\cite[Corollary 3.55]{wallach}. Thus, every semisimple $G_0$-orbit in $\g_1$ intersects any given
Cartan subspace nontrivially. For a Cartan subspace $\h\leq \g_1$ let
$$W_\h = N_{G_0}(\h)/Z_{G_0}(\h)$$
be the {\em little Weyl group}, also called the  {\em Weyl group of the
graded Lie algebra $\g$}. This group was studied in detail by Vinberg
\cite{vinberg}, who proved (among other things) that two elements of $\h$
are $G_0$-conjugate if and only if they are $W_\h$-conjugate, see \cite[Theorem 2]{vinberg} or \cite[Proposition 3.61]{wallach}.

From now on we fix a Cartan subspace $\h$ in $\g_1$. By the remarks above, the
classification of the semisimple $G_0$-orbits in $\g_1$ is reduced to the
classification of the $W_\h$-orbits in $\h$.  Let $\Phi$ be the root system of $\g$ with respect to $\h$. Let $W$ be the 
Weyl group of $\Phi$. By definition, $W_\h$ is naturally a subgroup of 
$N_G(\h)/Z_G(\h)$; the next result shows that we actually have equality.  This will allow us 
to identify $W$ and $W_\h$. 

\begin{lemma}
We have $W_\h=N_G(\h)/Z_G(\h)\cong W$.
\end{lemma}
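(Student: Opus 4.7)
The plan is to prove $W_\h = N_G(\h)/Z_G(\h)$ by verifying the nontrivial inclusion, namely that every element of the full Weyl group is represented by an element of $G_0$. The isomorphism $N_G(\h)/Z_G(\h) \cong W$ on the right-hand side is automatic because $\h$ is simultaneously a Cartan subspace and a Cartan subalgebra of $\g$ (the maximal-rank hypothesis), and the inclusion $W_\h \leq N_G(\h)/Z_G(\h)$ follows from $G_0 \leq G$.

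Since $W$ is generated by the reflections $\{s_\alpha : \alpha \in \Phi\}$, it suffices to exhibit, for every root $\alpha$, an element $g_\alpha \in N_{G_0}(\h)$ whose image in $W$ is $s_\alpha$. The crucial observation is that $\theta$ acts as $-\mathrm{id}$ on $\h$ (because $\h \leq \g_1$), so for any $h \in \h$ and $x \in \g_\alpha$ one checks $[h, \theta(x)] = -\alpha(h)\theta(x)$, whence $\theta(\g_\alpha) = \g_{-\alpha}$. Given $e_\alpha \in \g_\alpha \setminus \{0\}$, put $f := \theta(e_\alpha) \in \g_{-\alpha}$; since $\g_{\pm\alpha}$ pair non-degenerately into $\C h_\alpha$ and both root spaces are one-dimensional, $[e_\alpha, f] = \lambda h_\alpha$ with $\lambda \neq 0$. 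Replacing $e_\alpha$ by $\lambda^{-1/2} e_\alpha$ and $f_\alpha := \theta(e_\alpha)$ accordingly, I obtain an $\ssl_2$-triple $(e_\alpha, h_\alpha, f_\alpha)$ satisfying the additional symmetry $\theta(e_\alpha) = f_\alpha$ and $\theta(f_\alpha) = e_\alpha$.

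Now the element $k_\alpha := e_\alpha + f_\alpha$ lies in $\g_0$, and I will take
\[
g_\alpha := \exp\bigl(\tfrac{i\pi}{2}\, \ad k_\alpha\bigr) \in G_0.
\]
To verify that $g_\alpha$ normalises $\h$ and induces $s_\alpha$, I would carry out the computation inside the $\ssl_2$-triple: on the plane $\C h_\alpha \oplus \C(f_\alpha - e_\alpha)$ the operator $\ad k_\alpha$ satisfies $\ad k_\alpha(h_\alpha) = 2(f_\alpha - e_\alpha)$ and $\ad k_\alpha(f_\alpha - e_\alpha) = 2 h_\alpha$, hence is semisimple with eigenvalues $\pm 2$, and $\exp(t\ad k_\alpha)$ acts as the matrix $\bigl(\begin{smallmatrix} \cosh 2t & \sinh 2t \\ \sinh 2t & \cosh 2t \end{smallmatrix}\bigr)$; at $t = i\pi/2$ this is $-I$, so $g_\alpha \cdot h_\alpha = -h_\alpha$. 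For any $h \in \ker \alpha \subseteq \h$ we have $[k_\alpha, h] = 0$, so $g_\alpha$ fixes $h$. Combining the two cases, $g_\alpha$ preserves $\h$ and acts on it exactly as $s_\alpha$.

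The only technical care needed is the rescaling step ensuring $\theta(e_\alpha) = f_\alpha$ is compatible with the standard normalisation $[e_\alpha, f_\alpha] = h_\alpha$; after that, the $\ssl_2$-calculation is routine, and the main conceptual input is simply that $\theta$ swaps opposite root spaces, which is forced by $\h \leq \g_1$. I do not expect a genuine obstacle here—the statement is really a structural consequence of the maximal-rank assumption.
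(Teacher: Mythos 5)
Your proof is correct and follows the same skeleton as the paper's: both reduce the statement to exhibiting, for each root $\alpha$, an element of $G_0$ that induces $s_\alpha$ on $\h$, and both hinge on the observation that $\theta|_\h=-\mathrm{id}$ forces $\theta(\g_\alpha)=\g_{-\alpha}$. Where you differ is in how that element is produced. The paper works with the $\theta$-stable subalgebra $\u(\alpha)\cong\ssl(2,\C)$ generated by $\g_{\pm\alpha}$, uses the surjection from the corresponding connected subgroup $U(\alpha)\leq G$ onto $\Aut(\u(\alpha))^\circ=\mathrm{PSL}(2,\C)$ to pick an abstract preimage $g_\alpha$ of $\theta|_{\u(\alpha)}$, and then needs a short commutation argument ($g_\alpha g g_\alpha^{-1}=\theta g\theta^{-1}$ checked on generators) to conclude that $g_\alpha$ is $\theta$-fixed and hence lies in $G_0$. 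You instead normalise $e_\alpha$ so that $(e_\alpha,h_\alpha,\theta(e_\alpha))$ is an $\ssl_2$-triple and write down $g_\alpha=\exp\bigl(\tfrac{i\pi}{2}\ad(e_\alpha+\theta(e_\alpha))\bigr)$ explicitly; membership in $G_0$ is then immediate because $e_\alpha+\theta(e_\alpha)\in\g_0$, and the action on $\h$ reduces to a two-by-two matrix exponential plus the trivial observation that $k_\alpha$ commutes with $\ker\alpha$. In fact your $g_\alpha$ restricts to $\u(\alpha)$ exactly as $\theta$ does (it swaps $e_\alpha$ and $f_\alpha$ and negates $h_\alpha$), so it is a concrete realisation of the paper's abstract lift. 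What your version buys is an explicit formula and a more transparent justification that the representative lies in $G_0$; the only extra care needed is the normalisation $[e_\alpha,\theta(e_\alpha)]=h_\alpha$, which you correctly justify by the nondegeneracy of the pairing $\g_\alpha\times\g_{-\alpha}\to\C h_\alpha$.
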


\begin{proof}
  It is well-known that $W\cong N_G(\h)/Z_G(\h)$, see \cite[Lemma~5.2.22]{graaf}. To prove $W_\h=N_G(\h)/Z_G(\h)$ we fix $\alpha\in \Phi$ and show that  $W_\h$ contains an element that acts as  $s_\alpha$ on $\h$. Let $x\in \g_\alpha$ and  $h\in \h$;  applying $\theta$ (the automorphism of $\g$ defining the grading) to the equality
  $[h,x] = \alpha(h)x$, we see that $\theta(x)\in \g_{-\alpha}$. Let $\u(\alpha)$
  be the subalgebra of $\g$ generated by $\g_\alpha$ and $\g_{-\alpha}$, so $\u(\alpha)\cong \ssl(2,\C)$ is stable under $\theta$.   Let $U(\alpha)$ denote the connected subgroup of $G$ with Lie algebra
  $\ad_\g \u$, that is, $U(\alpha)$ is generated by $\exp (\ad_\g tx_\alpha)$ and
  $\exp (\ad_\g tx_{-\alpha})$ with $x_{\pm\alpha}\in \g_{\pm\alpha}$ and $t\in \C$.
  The automorphism group of $\ssl(2,\C)$ is the adjoint group $\mathrm{PSL}(2,\C)$. By
  general theory of semisimple algebraic groups, see \cite[p.\ 182]{graaf},
  there is a surjective morphism of algebraic groups $U(\alpha) \to  \mathrm{PSL}(2,\C)$. Let $g_\alpha\in U(\alpha)$ denote an inverse image under this morphism of the restriction $\theta|_{\u(\alpha)}\in\mathrm{PSL}(2,\C)$. If $g\in U(\alpha)$, then
  $g_\alpha g g_\alpha^{-1} = \theta g\theta^{-1}$:  this is easy to check for generators $\exp(\ad_\g t x_{\pm\alpha})$, the general case follows from that. If we take $g=g_\alpha$, then we get $\theta g_\alpha \theta^{-1} =g_\alpha$, which proves that    $g_\alpha\in G_0$. Note that  $\h = \langle h_\alpha \rangle \oplus
  \hat \h_\alpha$ where $\hat \h_\alpha = \{x\in \h : \alpha(x) = 0\}$, so all elements of $U(\alpha)$ act as the identity on $\hat \h_\alpha$.
  Furthermore, $h_\alpha\in \u(\alpha)$, so  $g_\alpha(h_\alpha) =
  \theta(h_\alpha) = -h_\alpha$. It follows that $g_\alpha$ acts as $s_\alpha$ on $\h$.
\end{proof}

Now our procedure for obtaining a classification of the semisimple
$G_0$-orbits in $\g_1$ is as follows. Dynkin devised an algorithm to
find the root subsystems of $\Phi$ up to $W$-conjugacy (see \cite{dyn}, and also
\cite[pp.~221]{graaf}). We use this algorithm to compute all subsystems
up to $W$-conjugacy and discard those
that are not complete. We also add the empty set to the list. Let
$\Pi_1,\ldots,\Pi_r$ denote the obtained subsystems. Then for each $\Pi_i$
we compute the subspace
$$\h_{\Pi_i} = \{ p\in \h : \alpha(p)=0 \text{ for all } \alpha\in\Pi_i\}.$$
We call the $r$ sets $\h_{\Pi_i}^\circ$ the \emph{canonical semisimple sets}. Note that $p\in \h_{\Pi_i}$ lies in $\h_{\Pi_i}^\circ$ if and only if
$\beta(p)\neq 0$ for all $\beta\in\Phi\setminus\Pi_i$; this leads to a finite number of linear conditions for $\beta$. Multiplying them, we obtain a polynomial
function $F_i$ on $\h_{\Pi_i}$ such that $p\in \h_{\Pi_i}$ lies in $\h_{\Pi_i}^\circ$
if and only if $F_i(p)\neq 0$. We determine the groups
$\Gamma_{\Pi_i}$ by computing the normalisers $N_W(W_{\Pi_i})$. (In our main example discussed below we construct the quotient $\Gamma_{\Pi_i}$ as a complement to $W_{\Pi_i}$ in $N_W(W_{\Pi_i})$.) A set $\Sigma$ of semisimple $G_0$-orbit representatives in $\g_1$ can now be obtained by taking the union of the sets of $\Gamma_{\Pi_i}$-representatives in $\h_{\Pi_i}^\circ$ for all~$i$.

%%%%%%%%%%

\subsection{Mixed orbits}\label{sec:mixed}

\noindent We now consider elements of \emph{mixed type}, that is, $x\in \g_1$
that is neither nilpotent nor semisimple. The investigation of such elements
is based on their Jordan decomposition (see Lemma \ref{lem:jd}).

Let $\Sigma$ be the set of semisimple orbit representatives from the previous
section; we also use the sets $\h_{\Pi_i}^\circ$ defined above. If $x=s+n$ is a mixed element in $\g_1$, then  there exists a unique $s'\in\Sigma$ such that $g(s)=s'$ for some
$g\in G_0$, hence $g(x)=s'+g(n)$ with $g(n)$ nilpotent and $[s',g(n)]=0$.
Together with the uniqueness of the Jordan decomposition, this shows the
following.

\begin{lemma}\label{lemSE}
Every element of mixed type in $\g_1$ is $G_0$-conjugate to an element in
\[\mathcal{M}=\{s+n : s\in \Sigma \text{ and nonzero nilpotent } n\in\z_{\g_1}(s)\},\]
and $s+n, s'+n'\in\mathcal{M}$ are $G_0$-conjugate if and only if $s=s'$ and
$n'=g(n)$ for some $g\in Z_{G_0}(s)$.
\end{lemma}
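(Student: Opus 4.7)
The plan is to reduce everything to the uniqueness of the Jordan decomposition (Lemma \ref{lem:jd}) and the defining property of the representative set $\Sigma$.

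For the first claim, I would start with an arbitrary mixed element $x\in\g_1$ and write its Jordan decomposition $x=s+n$ where $s\in\g_1$ is semisimple, $n\in\g_1$ is nilpotent, $[s,n]=0$, and $n\neq 0$ (since $x$ is not semisimple). By the construction of $\Sigma$ in Section~\ref{secZ2}, the $G_0$-orbit of $s$ contains a unique element $s'\in\Sigma$, so there exists $g\in G_0$ with $g(s)=s'$. Since $G_0$ acts on $\g$ by Lie algebra automorphisms that preserve the grading, $g(n)$ lies in $\g_1$ and is nilpotent, and $[s',g(n)]=g([s,n])=0$, which puts $g(n)\in \z_{\g_1}(s')$. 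Because $g$ is invertible we still have $g(n)\neq 0$, so $g(x)=s'+g(n)\in\mathcal{M}$.

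For the ``if and only if'' statement, the converse direction is immediate: if $s=s'$ and $n'=g(n)$ for some $g\in Z_{G_0}(s)$, then $g(s+n)=g(s)+g(n)=s+n'=s'+n'$. For the forward direction, suppose $g\in G_0$ satisfies $g(s+n)=s'+n'$. Then $g(s)+g(n)$ and $s'+n'$ agree, and both expressions present commuting semisimple-plus-nilpotent sums inside $\g_1$ (again using that $G_0$ preserves semisimplicity, nilpotency, the grading, and the bracket). The uniqueness in Lemma~\ref{lem:jd} therefore forces $g(s)=s'$ and $g(n)=n'$. Since $s,s'\in\Sigma$ are $G_0$-conjugate and $\Sigma$ contains exactly one representative of each semisimple $G_0$-orbit, we conclude $s=s'$, whence $g\in Z_{G_0}(s)$, and $n'=g(n)$ as required.

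The only conceptual step is the equivariance of the Jordan decomposition under the $G_0$-action, i.e.\ that $g\in G_0$ sends semisimple (resp.\ nilpotent) elements of $\g_1$ to semisimple (resp.\ nilpotent) elements of $\g_1$; this follows because $G_0\subseteq\Aut(\g)$ acts by Lie algebra automorphisms commuting with $\theta$, so it preserves the grading and commutes with the abstract Jordan decomposition in $\End(\g)$ via the adjoint representation. I do not foresee a genuine obstacle; the lemma is essentially a bookkeeping statement, and the main point is to invoke uniqueness of Jordan decomposition in tandem with the irredundancy of $\Sigma$ to rule out spurious $G_0$-equivalences between different choices of semisimple representative.
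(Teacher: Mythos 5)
Your argument is correct and is essentially the paper's own: the paper derives the lemma in the paragraph immediately preceding it by conjugating the semisimple part of the Jordan decomposition onto its unique representative in $\Sigma$ and then invoking the uniqueness in Lemma~\ref{lem:jd}, exactly as you do. Your write-up merely makes explicit the (standard) fact that $G_0\subseteq\Aut(\g)$ preserves the grading, semisimplicity, and nilpotency, which the paper leaves implicit.
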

 
Because of this lemma we are interested in determining the centraliser
$Z_{G_0}(s)$ of semisimple $s$. This requires the following preliminary
lemma, which seems to be well-known; we include a proof because we could not
find a precise reference in the literature.

\begin{lemma}\label{lemZG}
Let $K\leq \GL(n,\C)$ be a connected reductive algebraic group with semisimple 
Lie algebra $\mathfrak{k}\leq \gl(n,\C)$. If  $x\in \mathfrak{k}$ is semisimple, then
$Z_K(x) =  \{ g\in K : gxg^{-1}=x\}$ is connected.
\end{lemma}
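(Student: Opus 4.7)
The plan is to reduce the statement to the classical theorem that, in a connected reductive algebraic group, the centraliser of any torus is connected. The bridge will be a torus $T\leq K$ canonically attached to $x$ such that $Z_K(T)=Z_K(x)$; once this identification is in place, the claim follows from a well-known result on centralisers of tori.

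First I would construct $T$ as the Zariski closure in $K$ of the one-parameter subgroup $\{\exp(tx):t\in\C\}$. Because $x$ is semisimple (diagonalisable) in $\gl(n,\C)$, each $\exp(tx)$ is a semisimple element, and these elements pairwise commute. Their Zariski closure is therefore a closed, connected, commutative subgroup of $K$ all of whose elements are semisimple, that is, a torus in $K$. Here one uses that $x\in\mathfrak{k}$ to ensure $\exp(tx)\in K$, together with the standard fact that the closure of a set of commuting semisimple matrices is a diagonalisable group.

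Next I would verify the identification $Z_K(x)=Z_K(T)$. For $g\in K$ one has $g\exp(tx)g^{-1}=\exp(t\,\Ad(g)(x))$; so if $\Ad(g)(x)=x$ then $g$ commutes with every $\exp(tx)$, and since the centraliser of $g$ in $K$ is Zariski closed and contains this dense family, it contains all of $T$. Conversely, if $g$ centralises $T$, differentiating the identity $g\exp(tx)g^{-1}=\exp(tx)$ at $t=0$ yields $\Ad(g)(x)=x$, i.e.\ $g\in Z_K(x)$.

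Finally I would invoke the standard theorem (see, e.g., Humphreys, \emph{Linear algebraic groups}, Theorem 22.3, or Borel's book) that the centraliser of any torus in a connected reductive algebraic group is connected. Applied to $T\leq K$ this immediately gives that $Z_K(T)=Z_K(x)$ is connected, completing the argument. The subtlest step is the first one---that the family of exponentials really assembles into a torus in $K$---which crucially uses the semisimplicity of $x$; the rest is a formal application of Zariski density and the cited result.
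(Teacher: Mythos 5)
Your proof is correct and follows essentially the same route as the paper's: attach to $x$ a torus $T\leq K$ with $Z_K(T)=Z_K(x)$ and then invoke the connectedness of centralisers of tori in connected algebraic groups. The only difference is in implementation: the paper builds the torus purely algebraically, as the smallest algebraic subgroup whose Lie algebra contains $x$ (shown to be a subtorus via the unit group of the associative algebra generated by $x$), whereas you take the Zariski closure of the holomorphic one-parameter group $\exp(\C x)$ and use Zariski density --- in characteristic $0$ these yield the same subgroup.
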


\begin{proof}
  Let $U$ be the smallest algebraic subgroup of $K$ whose Lie algebra $\u$ contains $x$; such a subgroup always exists and it is unique and connected, see \cite[Theorem 4.1.5]{graaf}. It follows from \cite[Lemma~4.7.3]{graaf} that $Z_K(U)=\{g\in K : gh=hg \text{ for all } u\in U\}$ equals $Z_K(\mathfrak{u})=\{g\in K: gu=ug \text{ for all } u\in\mathfrak{u}\}$. Let $A$ be  the
associative matrix algebra with identity generated by $x$. It follows from  \cite[Example~3.6.9]{graaf} that the unit group $A^*$ of $A$ is an algebraic subgroup
of $\GL(n,\C)$ with Lie algebra $A$ where the Lie bracket is given by
the commutator. By \cite[Theorem 4.1.5]{graaf} we  have  $U\leq A^\ast$. Since elements in $A$ are linear combinations of powers of $x$, it follows that every $w\in Z_K(x)$ also centralises $A^\ast$, in particular, $w\in Z_K(U)$, thus $Z_K(x)\leq Z_K(U)=Z_G(\u)$. Since $x\in\u$, we have $Z_K(\u)\leq Z_K(x)$, hence $Z_K(x)=Z_K(U)$. Since $U\leq A^\ast$ consists of commuting semisimple elements, $U\leq K$ is a subtorus; now it follows from \cite[Corollary 8.13a)]{malletest} that $Z_K(U)$ is connected.
\end{proof}
 
The next lemma shows that the determination of the centralisers for the
infinitely many $p\in\Sigma$ can be reduced to a finite calculation: it
suffices to consider one explicit element in each $\h_{\Pi_i}^\circ$. For $x\in \g_1$ we denote its centraliser in $\g$ by $\z_g(x)$.

\begin{lemma}\label{lemCent}
If $x,y\in \h_{\Pi_i}^\circ$, then $\z_\g(x)=\z_\g(y)$ and $Z_{G_0}(x)=Z_{G_0}(y)$.
\end{lemma}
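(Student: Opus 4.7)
The plan is to exploit the root space decomposition of $\g$ with respect to the Cartan subalgebra $\h$. Since $(\g,\g_0)$ is of maximal rank, the Cartan subspace $\h\leq\g_1$ is simultaneously a Cartan subalgebra of $\g$, so I can write $\g=\h\oplus\bigoplus_{\alpha\in\Phi}\g_\alpha$. The key observation is that the hypothesis $x,y\in\h_{\Pi_i}^\circ$ is precisely the statement that $\Phi_x=\Phi_y=\Pi_i$. Since $[p,y_\alpha]=\alpha(p)y_\alpha$ for $p\in\h$ and $y_\alpha\in\g_\alpha$, a direct computation yields
\[
\z_\g(p)=\h\oplus\bigoplus_{\alpha\in\Phi_p}\g_\alpha \qquad \text{for every } p\in\h.
\]
Applying this with $p=x$ and $p=y$ immediately gives $\z_\g(x)=\z_\g(y)$, and intersecting with $\g_0$ then yields $\z_{\g_0}(x)=\z_{\g_0}(y)$ as well.

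For the group-theoretic equality, the plan is to pass from Lie algebras to groups via Lemma \ref{lemZG}. Both $x$ and $y$ are semisimple in $\g$ since they lie in a Cartan subalgebra, and $G_0$ is a connected reductive algebraic subgroup of $\GL(\g)$ whose Lie algebra $\ad_\g \g_0$ is semisimple (a glance at Table \ref{tabMaxSp} confirms that $\g_0$ is semisimple in every row). Lemma \ref{lemZG} will therefore guarantee that $Z_{G_0}(x)$ and $Z_{G_0}(y)$ are connected closed algebraic subgroups of $G_0$, with Lie algebras $\z_{\g_0}(x)$ and $\z_{\g_0}(y)$ respectively. These Lie algebras coincide by the first part, and since two connected algebraic subgroups of an ambient algebraic group that share the same Lie algebra must be equal, we conclude $Z_{G_0}(x)=Z_{G_0}(y)$.

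I do not expect a significant obstacle: the whole argument rests on the root space decomposition together with the standard fact that a connected algebraic subgroup is determined by its Lie algebra. The only point worth flagging is the verification that Lemma \ref{lemZG} applies in our setting, which is immediate from the fact that every entry $\g_0$ in Table \ref{tabMaxSp} is semisimple and that $G_0$ is connected reductive by construction.
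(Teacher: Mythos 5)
Your first half is exactly the paper's argument: both centralisers equal $\h\oplus\bigoplus_{\alpha\in\Pi_i}\g_\alpha$, and intersecting with $\g_0$ gives $\z_{\g_0}(x)=\z_{\g_0}(y)$. That part is fine.

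The second half has a genuine gap. You invoke Lemma \ref{lemZG} with $K=G_0$ to conclude that $Z_{G_0}(x)$ is connected, but that lemma requires the semisimple element to lie in the Lie algebra of the group in question: its proof runs through the minimal algebraic subgroup $U\leq K$ whose Lie algebra contains $x$. Here $x\in\g_1$, and $\g_1$ is \emph{not} contained in $\g_0=\mathrm{Lie}(G_0)$ (it is the $(-1)$-eigenspace of $\theta$), so the hypothesis fails and the lemma does not apply. Worse, the conclusion you want from it is actually false in general: $Z_{G_0}(x)$ for semisimple $x\in\g_1$ is typically disconnected --- this is precisely why the paper later has to track component groups $A_0/A_0^\circ$ when classifying mixed orbits, and Table \ref{tabZ} exhibits nontrivial component groups for every semisimple class. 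Since your argument only controls the Lie algebra, it can at best identify the identity components $Z_{G_0}(x)^\circ=Z_{G_0}(y)^\circ$, leaving the components unaccounted for. The fix is the paper's route: apply Lemma \ref{lemZG} to $K=G$, the full adjoint group, whose Lie algebra $\g$ does contain $x$ and $y$; then $Z_G(x)$ and $Z_G(y)$ are connected with the common Lie algebra $\z_\g(x)=\z_\g(y)$, hence equal as algebraic groups, and intersecting the single group $Z_G(x)=Z_G(y)$ with $G_0$ gives $Z_{G_0}(x)=Z_{G_0}(y)$ without any connectedness claim about the latter.
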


\begin{proof}
Note that $\z_\g(x)=\z_\g(y)$ as both  equal $\h \oplus \bigoplus\nolimits_{\alpha\in \Pi_i} \g_\alpha$. Intersecting with $\g_0$ yields   $\z_{\g_0}(x) = \z_{\g_0}(y)$.
As in the proof of Lemma \ref{lemZG}, let $U_x,U_y\leq G$ be the minimal
subtori whose Lie algebras $\u_x$ and $\u_y$ contain $x$ and $y$, respectively;
the proof also showed that $Z_G(x)=Z_G(U_x)=Z_G(\u_x)$ and $Z_G(y)=Z_G(U_y)=
Z_G(\u_y)$ are both connected.  Moreover, both groups  have the same Lie
algebra $\z_\g(x)=\z_\g(y)$: consider  the adjoint representation ${\rm Ad}
\colon G \to \GL(\g)$ with differential $\ad \colon \g \to \mathfrak{gl}(\g)$.
It follows from \cite[Corollary~4.2.8]{graaf} that $Z_G(x) =
\{ g\in G : {\rm Ad}(g)(x)=x \}$ has Lie algebra $\{ y\in \g : \ad y (x) = 0\}
= \z_\g(x)$. Now  \cite[Theorem 4.2.2]{graaf}  shows that  $Z_{G}(x)=Z_{G}(y)$,
and so $Z_{G_0}(x)=Z_{G_0}(y)$.
\end{proof}

\begin{remark}Now let $q_1,q_2\in \h_{\Pi_i}^\circ$, so $\z_\g(q_1) = \z_\g(q_2)$ and
$Z_{G_0}(q_1) = Z_{G_0}(q_2)$. Write $\a = \z_\g(q_1)$ and $A_0 = Z_{G_0}(q_1)$.
We note that $\a=\a_0\oplus\a_1$ is graded with each $\a_i = \g_i \cap \a$,  the Lie algebra of $A_0$ is $\a_0$, and $A_0$ acts on $\a_1$. Let $n_1,\ldots, n_s$ be representatives of the nilpotent
$A_0$-orbits in $\a_1$. 
Then the $G_0$-orbits of mixed elements with semisimple parts $q_1$ and $q_2$ have
representatives $q_1+n_i$ and $q_2+n_i$, respectively, for $i\in\{1,\ldots,s\}$. In particular, the nilpotent parts of mixed elements with semisimple part in $\h_{\Pi_i}^\circ$ do not depend on the choice of the particular semisimple element. In order to determine the nilpotent elements $n_i$, we first determine the nilpotent orbits
of the identity component $A_0^\circ$ acting on $\a_1$.  This can be done by an algorithm that only works with the Lie algebra $\a$, see  \cite[Section 8.4]{graaf}. The fusion of these $A_0^\circ$-orbits in $A_0$ can be decided by computing a set of representatives in $A_0$ of the the component group $A_0/A_0^\circ$.
\end{remark}

%%%%%%%%%%%%%%%%%%%%%%%%%%%%%%%%%%%%%%%%%%%%%%%%%%%%%%%%%%%%%%%%%%%%%%%%%%%%%%%%%%%%%%%%%%%%%%%%%%%%%%%
%%%%%%%%%%%%%%%%%%%%%%%%%%%%%%%%%%%%%%%%%%%%%%%%%%%%%%%%%%%%%%%%%%%%%%%%%%%%%%%%%%%%%%%%%%%%%%%%%%%%%%%
%%%%%%%%%%%%%%%%%%%%%%%%%%%%%%%%%%%%%%%%%%%%%%%%%%%%%%%%%%%%%%%%%%%%%%%%%%%%%%%%%%%%%%%%%%%%%%%%%%%%%%%
%%%%%%%%%%%%%%%%%%%%%%%%%%%%%%%%%%%%%%%%%%%%%%%%%%%%%%%%%%%%%%%%%%%%%%%%%%%%%%%%%%%%%%%%%%%%%%%%%%%%%%%
%%%%%%%%%%%%%%%%%%%%%%%%%%%%%%%%%%%%%%%%%%%%%%%%%%%%%%%%%%%%%%%%%%%%%%%%%%%%%%%%%%%%%%%%%%%%%%%%%%%%%%%
\section{The orbits of $\SL(2,\C)^4$ on $\mathcal{H}_4$}\label{secOrbit} 

\noindent In this section we classify the orbits of the group
\[\widehat{G}=\SL(2,\C)^4\]
acting on the space $\mathcal{H}_4=\C^2\otimes \C^2\otimes \C^2\otimes \C^2$. First, we show how this action comes
from a symmetric pair as studied in the previous section.

Let  $\g$ be the simple Lie algebra of type D$_4$ defined over the complex
numbers. Let $\Psi$ denote its root system with respect to a fixed Cartan
subalgebra $\mathfrak{t}$. Let $\gamma_1,\ldots,\gamma_4$ be a fixed choice of simple roots such that the Dynkin diagram of $\Psi$ is labelled as follows
\begin{center}
\begin{tikzpicture}
\node[dnode, label=below:{\small $1$}] (1) at (0,1) {};
\node[dnode, label=below:{\small $2$}] (2) at (1,1) {};
\node[dnode, label=below:{\small $3$}] (3) at (2,1) {};
\node[dnode, label=left:{\small $4$}] (4) at (1,2) {};
\path (1) edge[sedge] (2)
(2) edge[sedge] (3)
(2) edge[sedge] (4);
\end{tikzpicture}
\end{center}
We now construct a $\Z/2\Z$-grading of $\g$: let $\g_0$ be spanned by
$\mathfrak{t}$ along with the root spaces $\g_\gamma$, where $\gamma= \sum_i k_i
\gamma_i$ has $k_2$ even, and let $\g_1$ be spanned by those $\g_\gamma$
where $\gamma= \sum_i k_i \gamma_i$ has $k_2$ odd. Let $\gamma_0=\gamma_1+2\gamma_2+\gamma_3+\gamma_4$ be the highest root
of $\Psi$. The  root system of $\g_0$ is $\{\pm \gamma_0, \pm \gamma_1, \pm \gamma_3,\pm\gamma_4\}$, hence
$$\g_0\cong \ssl(2,\C)^4=\ssl(2,\C)\oplus \ssl(2,\C)\oplus\ssl(2,\C)\oplus\ssl(2,\C).$$
Taking $-\gamma_0,\gamma_1,\gamma_3,\gamma_4$ as basis of simple roots of
$\g_0$ we have that $-\gamma_2$ is the highest weight of the $\g_0$-module
$\g_1$, which therefore is isomorphic to $\mathcal{H}_4= \C^2\otimes \C^2\otimes \C^2\otimes \C^2$. We fix a basis $\{e_0,e_1\}$ of $\C^2$ and denote the basis elements of
$\mathcal{H}_4$ by
\[\mye{i_1i_2i_3i_4}= e_{i_1}\otimes e_{i_2}\otimes e_{i_3}\otimes e_{i_4}.\]
Mapping any nonzero root vector in $\g_{-\gamma_2}$ to $\mye{0000}$
extends uniquely to an isomorphism $\g_1 \to \mathcal{H}_4$ of $\ssl(2,\C)^4$-modules. We denote  by $G$ the adjoint group of $\g$, and we write $G_0$ for the 
connected algebraic subgroup of $G$ with  Lie algebra $\ad_\g \g_0\cong
\mathfrak{sl}(2,\C)^4$. 
The isomorphism $\ssl(2,\C)^4 \to \g_0$ lifts to a surjective 
morphism $\pi\colon \widehat{G}\to G_0$  of algebraic groups, which makes
$\g_1$ into a $\wG$-module isomorphic to $\mathcal{H}_4$.

It is well-known (cf.\ \cite{verstraete}), and we have verified by computer, that 
$$u_1=\mye{0000}+\mye{1111},\;\; u_2=\mye{0110}+\mye{1001},\;\;
u_3=\mye{0101}+\mye{1010},\;\; u_4=\mye{0011}+\mye{1100}$$
span a Cartan subspace $\h$ of $\g_1$. This shows that the symmetric pair $(\g_0,\g_1)$ is
of maximal rank.

Let $\Phi$ be the root system corresponding to the Cartan subalgebra $\h$ of
$\g$. (Note that $\h$ is necessarily different from $\mathfrak{t}$ as
$\h\subset\g_1$ and $\mathfrak{t}\subset \g_0$.)
By $W$ we denote its Weyl group. Representing a root $\alpha$ by the 4-tuple
$(\alpha(u_1),\ldots,\alpha(u_4))$, a choice of simple roots is
$\Delta=\{\alpha_1,\ldots,\alpha_4\}$, where
$$  \alpha_1 = (0,-2,0,0),\;\; \alpha_2 = (1,1,1,1),\;\;  \alpha_3 = (0,0,-2,0),\;\; \alpha_4 = (0,0,0,-2);$$
here we use the same enumeration as in the above Dynkin diagram.

\begin{remark}\label{remSym}
Consider the group $\mathrm{Sym}_4$ of all permutations of
$\{1,2,3,4\}$. For $\sigma\in{\rm Sym}_4$ we define the linear map
$\pi_{\sigma}\colon \g_1\to\g_1$ that maps each $\mye{i_1i_2i_3i_4}$ to
$\mye{i_{1^\sigma}i_{2^\sigma}i_{3^\sigma}i_{4^\sigma}}$. Since $\g_1$ generates
$\g$ as a Lie algebra, there is at most one way in which $\pi_\sigma$
extends to an automorphism of $\g$. We have checked by computer that indeed
for all $\sigma \in \mathrm{Sym}_4$ this yields an automorphism of $\g$. The group generated by all these $\pi_\sigma$ fixes $u_1$ and permutes
$\{u_2,u_3,u_4\}$ as ${\rm Sym}_3$.  Specifically, $\pi_{(2,3)}$ swaps $u_3$ and
$u_4$, and $\pi_{(2,4)}$ swaps $u_2$ and $u_4$. Since $\wG$ has no elements acting as a $\pi_\sigma$, the space $\mathcal{H}_4$ is acted upon by the split product \[\mathcal{S} =\mathrm{Sym}_4\ltimes \SL(2,4)^4.\]
\end{remark} 

\noindent {\bf Classifications:} We use the techniques described above  to classify the $\widehat{G}$-orbits
on $\mathcal{H}_4$, which are exactly the  $G_0$-orbits in $\g_1$.
Our classification of semisimple elements  is given in Theorem \ref{thmSE} and Table \ref{tabW}; our
classification of mixed elements is given in Theorem \ref{thmME}. We then
consider the group $\mathcal{S}$: A classification of semisimple, nilpotent,
and mixed elements up to $\mathcal{S}$-conjugacy is described in Theorem~\ref{thmSConj}.

%%%%%%%%%%%%%%
\subsection{The orbits of semisimple elements}\label{secOurSS}

We use Dynkin's algorithm to compute all root subsystems of $\Phi$. This yields
11 subsystems, up to $W$-conjugacy. One of these subsystems is of type
$4{\rm A}_1$ and therefore not complete. The others are complete and listed
in the third column of Table \ref{tabW}. By adding the empty set we obtain
11 subsystems $\Pi_1,\ldots,\Pi_{11}$.  Table \ref{tabW} also contains the data that we computed starting from the complete
root subsystems. The fourth column has the description of $\h_{\Pi_i}$ and the
fifth column gives the polynomial conditions that an element of $\h_{\Pi_i}$
has to satisfy to belong to $\h_{\Pi_i}^\circ$; see
Remark \ref{remhpc} below for more details. The sixth column describes
the groups $\Gamma_{\Pi_i}$, see also the statement of Theorem \ref{thmSE} for
more details. Here we write $I_4=\diag(1,1,1,1)$ for the $4\times 4$ identity
matrix. Finally, the last column displays the semisimple part of the
reductive centraliser $\z_\g(p_i)$, where $p_i$ is some element in $\h_{\Pi_i}^\circ$, cf.\ Lemma \ref{lemCent}.
 
 \begin{theorem}\label{thmSE}
Up to $G_0$-conjugacy, the semisimple elements of $\g_1$ are the
$\Gamma_{\Pi_i}$-classes of elements in $\h_{\Pi_i}^\circ$ for $i=1,\ldots,11$,
as given in Table \ref{tabW}; each $\Gamma_{\Pi_i}$ is  realised
as a complement subgroup to $W_{\Pi_i}$ in~$N_W(W_{\Pi_i})$: the group $\Gamma_{\Pi_2}\cong (\mathbb{Z}/2\mathbb{Z})^3$ is generated by all $4\times 4$ diagonal matrices that have two $1$s and two $-1$s on the diagonal; the groups $\Gamma_{\Pi_4},\Gamma_{\Pi_5},\Gamma_{\Pi_6}\cong \Dih_4$ are isomorphic to the dihedral group of order $8$ and defined as
\[\Gamma_{\Pi_4}=\langle \SmallMatrix{1&0&0&0\\0&1&0&0\\0&0&1&0\\0&0&0&-1}, \SmallMatrix{0&0&0&-1\\0&0&1&0\\0&1&0&0\\1&0&0&0}\rangle,\;
\Gamma_{\Pi_5}=\langle  \SmallMatrix{1&0&0&0\\0&1&0&0\\0&0&-1&0\\0&0&0&1},\SmallMatrix{0&0&-1&0\\0&0&0&-1\\1&0&0&0\\0&-1&0&0}\rangle,\;
\Gamma_{\Pi_6}=\langle  \SmallMatrix{-1&0&0&0\\0&1&0&0\\0&0&1&0\\0&0&0&1},\SmallMatrix{0&-1&0&0\\1&0&0&0\\0&0&0&-1\\0&0&-1&0}\rangle.
\]
 \end{theorem}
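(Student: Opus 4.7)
The plan is to apply the general machinery developed in Sections~\ref{secGT}--\ref{secZ2} to the specific symmetric pair of maximal rank of type $\mathrm{D}_4$ constructed at the beginning of Section~\ref{secOrbit}. The starting point is the identification $W_\h = N_G(\h)/Z_G(\h)\cong W$ (the Weyl group of $\Phi$), together with Vinberg's result that two elements of $\h$ lie in the same $G_0$-orbit on $\g_1$ if and only if they are $W_\h$-conjugate. Thus, classifying semisimple $G_0$-orbits in $\g_1$ reduces to classifying $W$-orbits in $\h$, which by the analysis of Section~\ref{secWeyl} (in particular Lemmas~\ref{lem:complete} and~\ref{lem:hpsi}, and Proposition~\ref{prop:gampsi}) decomposes as the disjoint union of the $\Gamma_{\Pi_i}$-orbits on the canonical semisimple sets $\h_{\Pi_i}^\circ$, as $\Pi_i$ ranges over representatives of $W$-conjugacy classes of complete root subsystems of~$\Phi$ (including~$\emptyset$).

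First I would run Dynkin's algorithm (as implemented in the GAP package SLA) on the $\mathrm{D}_4$ root system $\Phi$ to enumerate all $W$-orbits of root subsystems. I would then discard those that are not complete in the sense of Lemma~\ref{lem:complete} (the $4\mathrm{A}_1$ subsystem is incomplete because a full $4\mathrm{A}_1$ rank-$4$ system generates the same $\Q$-span as $\Phi$ but is strictly contained in a $\mathrm{D}_4$), obtaining together with the empty system the eleven subsystems $\Pi_1,\ldots,\Pi_{11}$ displayed in the third column of Table~\ref{tabW}. For each $\Pi_i$, the subspace $\h_{\Pi_i} = \{p\in\h : \alpha(p)=0 \text{ for all } \alpha\in\Pi_i\}$ is obtained by solving a linear system in the coordinates $(\alpha(u_1),\ldots,\alpha(u_4))$ of Section~\ref{secOurSS}, and the polynomial condition $F_i(p)\neq 0$ defining $\h_{\Pi_i}^\circ$ inside $\h_{\Pi_i}$ is the product of the linear forms $\beta|_{\h_{\Pi_i}}$ for $\beta\in\Phi\setminus\Pi_i$ (cf.\ Remark~\ref{remhpc}). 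Applying Proposition~\ref{prop:gampsi} then yields that two points of $\h_{\Pi_i}^\circ$ are $W$-conjugate (equivalently $G_0$-conjugate) if and only if they are $\Gamma_{\Pi_i}$-conjugate, which gives the first assertion of the theorem.

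To prove the remaining, more delicate assertions I would compute $N_W(W_{\Pi_i})$ by a direct search inside the finite group~$W$, compute the quotient group $\Gamma_{\Pi_i}$, and then search for a complement $\Gamma_{\Pi_i}\hookrightarrow N_W(W_{\Pi_i})$. Realising $W$ via its natural action on $\h$ (which by Lemma~\ref{lem:stab} and formula~\eqref{eq:weylact1} is explicit in the basis $u_1,\ldots,u_4$) lets one exhibit each generator as a matrix in $\GL(4,\C)$. For the abelian case $\Gamma_{\Pi_2}\cong(\Z/2\Z)^3$ I would verify that the specified six diagonal sign matrices lie in $W$ (by writing each as a product of the reflections $s_{\alpha_i}$), centralise $W_{\Pi_2}$, and together with $W_{\Pi_2}$ generate all of $N_W(W_{\Pi_2})$ with trivial intersection. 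For the three non-abelian cases $\Gamma_{\Pi_4},\Gamma_{\Pi_5},\Gamma_{\Pi_6}\cong\Dih_4$, I would check by direct matrix computation that the listed generators satisfy the dihedral relations (an order-$2$ element and an order-$4$ element conjugating it to its inverse), that they lie in $N_W(W_{\Pi_i})$, and that the subgroup they generate meets $W_{\Pi_i}$ trivially and has the right order so that it is indeed a complement.

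The main obstacle is the last step: verifying that a complement to $W_{\Pi_i}$ in $N_W(W_{\Pi_i})$ actually exists (this is not automatic for general extensions) and exhibiting it with generators acting in the given nice form on the coordinates of~$\h_{\Pi_i}$. This is a matter of finite group calculation inside $W$, carried out with GAP, but one has to choose generators carefully so that their action on the explicit Cartan subspace basis $u_1,u_2,u_3,u_4$ matches the matrices displayed in the theorem statement; the existence of such complements (rather than merely the abstract isomorphism type of $\Gamma_{\Pi_i}$) is the key computational content that makes the orbit description in Table~\ref{tabW} effective.
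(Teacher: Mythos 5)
Your proposal is correct and follows essentially the same route as the paper: reduce to $W$-orbits on $\h$ via the identification $W_\h\cong W$ and Vinberg's conjugacy theorem, decompose $\h$ into the canonical semisimple sets attached to the complete root subsystems produced by Dynkin's algorithm (discarding the incomplete $4\mathrm{A}_1$), and apply Proposition~\ref{prop:gampsi} together with an explicit finite computation of $N_W(W_{\Pi_i})$ and a complement realising $\Gamma_{\Pi_i}$. This matches the procedure laid out in Section~\ref{secZ2} and carried out for Table~\ref{tabW}, including the computational caveat about exhibiting actual complements rather than just quotients.
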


In the following we denote by $\Sigma$ a set of $G_0$-orbit representatives of semisimple elements in $\g_1$.

\begin{table}[ht]
\scalebox{0.87}{\begin{tabular}{r|r|r|c|c|c|c}
  {\bf $\pmb{i}$} & {\bf type of $\Pi_i$} & {\bf roots of $\Pi_i$} & {\bf elements of $\pmb{\h_{\Pi_i}}$} & {\bf condition for being in $\pmb{\h_{\Pi_i}^\circ}$}& $\pmb{\Gamma_{\Pi_i}}$ & $\pmb{\z_\g(p_i)'}$\\
  \hline
  1& $\emptyset$     &  & $\lambda_1u_1+\cdots+\lambda_4u_4$ &
  $\lambda_i\neq 0$ and  $\lambda_1\notin \{\pm \lambda_2\pm \lambda_3\pm \lambda_4\}$ & $W$ & $0$\\
  2 & ${\rm A}_1$ & $\alpha_4$ & $\lambda_1u_1+\lambda_2u_2+\lambda_3u_3$ &
  $\lambda_i\neq 0$ and  $\lambda_1\notin\{ \pm\lambda_2\pm\lambda_3\}$& $(\Z/2\Z)^3$ & $\ssl(2,\C)$\\
  3 & ${\rm A}_2$ & $\alpha_2,\alpha_4$ & $\lambda_1 (u_1-u_2) + \lambda_2 (u_1-u_3)$ &
   $\lambda_u\neq 0$ and $\lambda_1\ne -\lambda_2$& $\langle-I_4\rangle$ & $\ssl(3,\C)$\\
  4 & $2{\rm A}_1$ & $\alpha_1,\alpha_3$ & $\lambda_1u_1+\lambda_2u_4$ &
  $\lambda_i\neq 0$ and $\lambda_1\notin \{\pm\lambda_2\}$& $\Dih_4$ & $\ssl(2,\C)^2$\\
  5 & $2{\rm A}_1$ & $\alpha_1,\alpha_4$ & $\lambda_1u_1+\lambda_2u_3$ &
  $\lambda_i\neq 0$ and $\lambda_1\notin \{\pm\lambda_2\}$& $\Dih_4$ & $\ssl(2,\C)^2$\\
  6 & $2{\rm A}_1$ & $\alpha_3,\alpha_4$ & $\lambda_1u_1+\lambda_2u_2$ &
    $\lambda_i\neq 0$ and $\lambda_1\notin \{\pm\lambda_2\}$& $\Dih_4$ & $\ssl(2,\C)^2$\\
  7 & ${\rm A}_3$ & $\alpha_1,\alpha_2,\alpha_3$ & $\lambda_1(u_1-u_4)$ & $\lambda_1\neq 0$& $\langle-I_4\rangle$ & $\ssl(4,\C)$\\
  8 & ${\rm A}_3$ & $\alpha_1,\alpha_2,\alpha_4$ & $\lambda_1(u_1-u_3)$ & $\lambda_1\neq 0$& $\langle-I_4\rangle$ & $\ssl(4,\C)$\\
  9 & ${\rm A}_3$ & $\alpha_2,\alpha_3,\alpha_4$ & $\lambda_1(u_1-u_2)$ & $\lambda_1\neq 0$& $\langle-I_4\rangle$ & $\ssl(4,\C)$\\
  10 & $3{\rm A}_1$ & $\alpha_1,\alpha_3,\alpha_4$ & $\lambda_1 u_1$ & $\lambda_1\neq 0$& $\langle-I_4\rangle$ & $\ssl(2,\C)^3$\\
  11 & ${\rm D}_4$ & $\alpha_1,\ldots,\alpha_4$  & $0$ & $0$& $1$ & $\so(4,\C)$
\end{tabular}}\\[1ex]\caption{Complete root subsystems $\Pi_i$ of $\Phi$, corresponding 
sets $\h_{\Pi_i}$ and $\h_{\Pi_i}^\circ$ with parameters $\lambda_1,\ldots,\lambda_4\in\C$, and groups $\Gamma_{\Pi_i}$; the last column displays
the derived algebra of the centraliser $\z_\g(p_i)$ for $p_i \in
\h_{\Pi_i}^\circ$.}\label{tabW} 
\end{table}

%\enlargethispage{6ex}
\begin{remark}\label{GammaConjugacy}
  To determine $\Sigma$ explicitly,  one still has to consider the  $\Gamma_{\Pi_i}$-classes of elements in $\h_{\Pi_i}^\circ$ for every $i=1,\ldots,10$; note that the case $i=11$ only contributes the zero element.  In view of Remark \ref{remSym}, the cases $i=4,5,6$ are all symmetric and classifications for one of these $i$ can be translated to classifications for the other two cases by applying one of the automorphisms $\pi_{(2,3)}$ or $\pi_{(2,4)}$ of $\g$; similarly, the cases $i=7,8,9$ are symmetric. If $i\in\{3,7,8,9,10\}$, then the elements listed under $\h_{\Pi_i}$ in column four of Table \ref{tabW}  need to be reduced modulo multiplication by $-1$. If $i=2$, then  we can change two signs of $u_1,u_2,u_3,u_4$ at once; since $u_4$ is not involved in elements of $\h_{\Pi_2}$, for fixed $\lambda_1,\lambda_2,\lambda_3$, all the elements $e_1\lambda_1u_1+e_2\lambda_2u_2+e_3\lambda_3u_3\in \h_{\Pi_2}^\circ$ with $e_1,e_2,e_3\in\{\pm1\}$ are $\Gamma_{\Pi_2}$-conjugate. If $i=4$, then we can swap $u_1$ and $u_4$, or change their signs; cases $i\in\{5,6\}$  are analogous. For $i=1$ we act with $W$; a direct calculation shows that every element in $W$ acts as  $PQ^i$ for some $i\in\{0,1,2\}$ where
  \[Q=\tfrac12\SmallMatrix{ 1& -1& -1& 1\\ 1& -1& 1& -1 \\  -1& -1& 1& 1 \\ 1& 1& 1& 1}\]and $P$ is any signed permutation matrix induced by an element in  $\langle (1,2)(3,4), (1,3)(2,4)\rangle$.  
\end{remark}

\begin{remark}\label{remhpc} We comment on the construction of  $\h_{\Pi_i}^\circ$ as listed in Table  \ref{tabW}. Recall that  $p\in \h_{\Pi_i}$ lies in $\h_{\Pi_i}^\circ$ if and only if $\beta(p)\neq 0$ for all $\beta\in\Phi\setminus\Pi_i$.
Representing
a positive root by the 4-tuple of its values on $u_1,\ldots,u_4$, we have $\alpha_1 = (0,-2,0,0)$, $\alpha_2 = (1,1,1,1)$, $\alpha_3 = (0,0,-2,0)$, $\alpha_4 = (0,0,0,-2)$, and 
\begin{align*}
  \alpha_1+\alpha_2&=(1,-1,1,1),& \alpha_2+\alpha_3&=(1,1,-1,1)\\
  \alpha_2+\alpha_4 &= (1,1,1,-1), &\alpha_1+\alpha_2+\alpha_3 &= (1,-1,-1,1), \\
  \alpha_1+\alpha_2+\alpha_4 &=   (1,-1,1,-1),  &\alpha_2+\alpha_3+\alpha_4&=(1,1,-1,-1),\\
  \alpha_1+\alpha_2+\alpha_3+\alpha_4&=(1,-1,-1,-1),
  &\alpha_1+2\alpha_2+\alpha_3+\alpha_4&=(2,0,0,0).
\end{align*}
Consider the third line of Table \ref{tabW}, so $p= \lambda_1(u_1-u_2)+\lambda_2(u_1-u_3)$; the coordinate vector with respect to the chosen basis of $\h$ is $(\lambda_1+\lambda_2,-\lambda_1,-\lambda_2,0)$. The positive roots in $\Pi_3$ are
$\alpha_2,\alpha_4,\alpha_2+\alpha_4$. For $p$ to be in $\h_{\Pi_3}^\circ$, the inner product of
$(\lambda_1+\lambda_2,-\lambda_1,-\lambda_2,0)$ with all positive roots other
than $\{\alpha_2,\alpha_4,\alpha_2+\alpha_4\}$ has to be nonzero; it is straightforward to see that this
reduces to the condition $\lambda_1\lambda_2(\lambda_1+\lambda_2)\neq 0$. In a similar way, one can determine the conditions for the other sets $\h_{\Pi_i}^\circ$.
\end{remark}

%%%%%%%%%%%%%%%%%%%%%%%%%%%%%%%%%%%%%%%%%%%%%%%%%%%%%%%%%%%%%%%%%%%%%%%%%%%%%%%%%%%%%%%%%%%%%%%%%%%%%%%
%%%%%%%%%%%%%%%%%%%%%%%%%%%%%%%%%%%%%%%%%%%%%%%%%%%%%%%%%%%%%%%%%%%%%%%%%%%%%%%%%%%%%%%%%%%%%%%%%%%%%%%
%%%%%%%%%%%%%%%%%%%%%%%%%%%%%%%%%%%%%%%%%%%%%%%%%%%%%%%%%%%%%%%%%%%%%%%%%%%%%%%%%%%%%%%%%%%%%%%%%%%%%%%
%%%%%%%%%%%%%%%%%%%%%%%%%%%%%%%%%%%%%%%%%%%%%%%%%%%%%%%%%%%%%%%%%%%%%%%%%%%%%%%%%%%%%%%%%%%%%%%%%%%%%%%
%%%%%%%%%%%%%%%%%%%%%%%%%%%%%%%%%%%%%%%%%%%%%%%%%%%%%%%%%%%%%%%%%%%%%%%%%%%%%%%%%%%%%%%%%%%%%%%%%%%%%%%
\subsection{The orbits of mixed elements}\label{secMixed}

As before, $\widehat{G}=\SL(2,\C)^4$; recall that there  is a surjective morphism $\pi\colon \widehat{G}\to G_0$  of
algebraic groups and $\widehat{G}$ acts as $G_0$ on $\g_1$. We start with an observation.

\begin{lemma}\label{lemCentSL}
If  $x,y\in \h_{\Pi_i}^\circ$ for some $i$, then $Z_{\widehat{G}}(x) = Z_{\widehat{G}}(y)$.
\end{lemma}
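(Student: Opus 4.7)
The plan is to reduce the claim to the already-established Lemma \ref{lemCent}, which gives $Z_{G_0}(x)=Z_{G_0}(y)$ for $x,y\in\h_{\Pi_i}^\circ$, and then to use the fact that the $\widehat{G}$-action on $\g_1$ factors through the surjective morphism $\pi\colon \widehat{G}\to G_0$.

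The key observation is the following: for any $z\in\g_1$ and any $g\in\widehat{G}$, the action is defined by $g\cdot z = \pi(g)\cdot z$, where the right-hand side is the $G_0$-action. Consequently, $g\in Z_{\widehat{G}}(z)$ if and only if $\pi(g)\in Z_{G_0}(z)$, which means
\[Z_{\widehat{G}}(z) = \pi^{-1}\bigl(Z_{G_0}(z)\bigr).\]
Applying this identity to $x$ and to $y$, and then invoking Lemma \ref{lemCent} to identify the two $G_0$-centralisers, yields
\[Z_{\widehat{G}}(x) = \pi^{-1}\bigl(Z_{G_0}(x)\bigr) = \pi^{-1}\bigl(Z_{G_0}(y)\bigr) = Z_{\widehat{G}}(y),\]
which is exactly the desired conclusion.

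There is essentially no obstacle here: the statement is a formal consequence of Lemma \ref{lemCent} combined with the fact that centralisers pull back cleanly under a group homomorphism inducing the action. The only thing worth spelling out is the compatibility of the $\widehat{G}$- and $G_0$-actions via $\pi$, which was established at the start of Section \ref{secOurSS}. No further analysis of the Jordan decomposition, root subsystem combinatorics, or the structure of $\widehat{G}$ is required for this particular lemma.
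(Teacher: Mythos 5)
Your proof is correct and follows exactly the same route as the paper: invoke Lemma \ref{lemCent} to get $Z_{G_0}(x)=Z_{G_0}(y)$, then note that $Z_{\widehat{G}}(z)=\pi^{-1}\bigl(Z_{G_0}(z)\bigr)$ since the $\widehat{G}$-action factors through $\pi$. Nothing further is needed.
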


\begin{proof}
By Lemma \ref{lemCent} we have $Z_{G_0}(x)=Z_{G_0}(y)$. Note that
$Z_{\widehat{G}}(x)$ is the preimage of $Z_{G_0}(x)$ in $\widehat{G}$ under
$\pi$, and similarly for $Z_{\widehat{G}}(y)$; therefore
$Z_{\widehat{G}}(x)=Z_{\widehat{G}}(y)$. 
\end{proof}

\begin{proposition}\label{propCent}
  If $s\in\Sigma$ lies in  $\h_{\Pi_i}^\circ$ as in Table \ref{tabW}, then $Z_{\widehat{G}}(s)$ is given in Row $i$ of Table \ref{tabZ}.
\end{proposition}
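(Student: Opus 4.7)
The plan is to use Lemma \ref{lemCentSL} to reduce the statement to a finite verification: it suffices to exhibit one convenient representative $p_i\in\h_{\Pi_i}^\circ$ for each $i\in\{1,\ldots,11\}$ and check that $Z_{\widehat{G}}(p_i)$ matches Row $i$ of Table \ref{tabZ}. All the remaining elements of $\Sigma$ landing in the same $\h_{\Pi_i}^\circ$ then yield the same centraliser for free.

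For each $i$, I would first pin down the connected centraliser inside $G_0$. Because the symmetric pair $(\g,\g_0)$ has maximal rank, $\h$ is simultaneously a Cartan subspace of $\g_1$ and a Cartan subalgebra of $\g$, so the standard root space description gives
\[
\z_\g(p_i)=\h\oplus\bigoplus_{\alpha\in\Pi_i}\g_\alpha,
\]
and intersecting with $\g_0$ yields $\z_{\g_0}(p_i)$. Since $\g_0\cong\ssl(2,\C)^4$ is semisimple, $G_0$ is a connected reductive algebraic group with semisimple Lie algebra, so Lemma \ref{lemZG} applies and $Z_{G_0}(p_i)$ is the unique connected subgroup of $G_0$ whose Lie algebra is $\z_{\g_0}(p_i)$. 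In particular, its isomorphism type can be read off from the type of $\Pi_i$ listed in Table \ref{tabW}.

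The next step is to pass from $G_0$ to $\widehat{G}$. Since $\widehat{G}$ acts on $\g_1$ through the morphism $\pi\colon\widehat{G}\to G_0$, we have $Z_{\widehat{G}}(p_i)=\pi^{-1}(Z_{G_0}(p_i))$, so there is a short exact sequence
\[
1\longrightarrow \ker\pi\longrightarrow Z_{\widehat{G}}(p_i)\longrightarrow Z_{G_0}(p_i)\longrightarrow 1
\]
with $\ker\pi$ a finite central subgroup contained in $\{\pm I_2\}^4\leq\widehat{G}$. To make this extension concrete in each case, I would pick a numerical $p_i$ whose coordinates are generic enough to satisfy the open conditions defining $\h_{\Pi_i}^\circ$, realise it as a tensor in $(\C^2)^{\otimes 4}$ via the isomorphism $\g_1\cong\mathcal{H}_4$, and directly solve $(g_1\otimes g_2\otimes g_3\otimes g_4)\cdot p_i=p_i$ inside $\SL(2,\C)^4$.

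The main obstacle will be the case-by-case bookkeeping of the finite component structure. While $\z_{\g_0}(p_i)$ completely dictates the connected part, the discrete pieces coming from $\ker\pi$ (together with the choice of which cover of the connected type-$\mathrm{A}_k$ or type-$\mathrm{D}_4$ factor sits inside $\widehat{G}$) must be tracked by hand in each row, since this information is invisible at the Lie algebra level. Consistent with the paper's overall computational approach, I would confirm the final form of each row of Table \ref{tabZ} using the GAP packages \textsf{SLA} and \textsf{Singular} mentioned in the introduction.
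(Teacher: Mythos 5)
Your overall strategy coincides with the paper's: Lemma \ref{lemCentSL} reduces the claim to a single representative in each $\h_{\Pi_i}^\circ$, and the entries of Table \ref{tabZ} are obtained by writing out the equations $g(s)=s$ for a general $g=(A,B,C,D)\in\SL(2,\C)^4$ in $16$ indeterminates and solving them with Gr\"obner bases. That computational core is exactly what the paper does.

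There is, however, a genuinely wrong intermediate step. Lemma \ref{lemZG} concerns $Z_K(x)$ for a semisimple element $x$ of the Lie algebra $\mathfrak{k}$ of $K$; here $p_i$ lies in $\g_1$, which is \emph{not} the Lie algebra of $G_0$ (that is $\g_0$), so the lemma does not apply and $Z_{G_0}(p_i)$ is \emph{not} connected in general. What the paper's Lemma \ref{lemCent} actually establishes is connectedness of $Z_G(p_i)$ for the adjoint group $G$ of all of $\g$, whose Lie algebra does contain $p_i$; intersecting with $G_0$ destroys connectedness. Consequently the finite pieces in Table \ref{tabZ} are not explained by $\ker\pi$ and covering choices alone: already in Row~1 the identity component is trivial, yet $(J,J,J,J)$ fixes $s$ and maps to a nontrivial element of $G_0$, so $Z_{G_0}(s)$ itself has a nontrivial component group (coming from $Z_{G_0}(\h)$ and the stabiliser of $s$ in the little Weyl group, not from $\ker\pi$). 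Your plan is rescued only because your final step --- solving $(g_1\otimes\cdots\otimes g_4)\cdot p_i=p_i$ directly --- would expose these extra components regardless; but the heuristic you propose for reading the answer off from $\Pi_i$ together with $\ker\pi$ would systematically undercount the component groups if it were trusted.
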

\begin{proof}
  The group $\widehat{G}$ acts naturally on $\g_1\cong \C^2\otimes \C^2\otimes  \C^2\otimes  \C^2$, and we can write down the equations for $g(s)=s$ where $g=(A,B,C,D)\in\SL(2,\C)^4$ is a general element with $16$ indeterminates $a_{ij},b_{i,j},c_{ij},d_{ij}$ defining the matrices $A,B,C,D$. We use Gr\"obner basis techniques to obtain a useful description of $Z_{\widehat{G}}(s)$. Table \ref{tabZ} summarises our results, where for $A=\SmallMatrix{a&b\\c&d}$ and $u,v\in\C$ with $u\ne 0$ we write  
\begin{equation}\label{eqmats}
  \begin{array}{rclrclrclrlcrclrclr} 
    A^\#\!\!\!\!&=&\!\!\!\!\SmallMatrix{d&c\\b&a},  & D(u,v)\!\!\!\!&=&\!\!\!\!\SmallMatrix{u&0\\v&u^{-1}}, &D(u)\!\!\!\!&=&\!\!\!\!\SmallMatrix{u&0\\0&u^{-1}}, & L(v)\!\!\!\!&=&\!\!\!\!D(1,v), & L\!\!\!\!&=&\!\!\!\!D(\imath)\\[2ex]
    M(a,b)\!\!\!\!&=&\!\!\!\!\SmallMatrix{a&b\\b&a},&I\!\!\!\!&=&\!\!\!\!\SmallMatrix{1&0\\0&1}, & J\!\!\!\!&=&\!\!\!\!\SmallMatrix{0&1\\-1&0}, & K\!\!\!\!&=&\!\!\!\!\SmallMatrix{0&\imath \\\imath &0}.
  \end{array}\qedhere
\end{equation}
\end{proof}

In Table \ref{tabZ} we often have $(J,J,J,J)^2=-(I,I,I,I)\in Z_{\widehat{G}}(s)^\circ$, in which case $Z_{\widehat{G}}(s)^\circ$ does not split in $Z_{\widehat{G}}(s)$, that is, the generators listed in the right column of Table \ref{tabZ} generate a group that is larger than the component group $Z_{\widehat{G}}(s)/Z_{\widehat{G}}(s)^\circ$.

\begin{table}\renewcommand\arraystretch{1.4} 
 \scalebox{0.9}{\begin{tabular}{ccc}
    {\bf $\pmb{i}$}  & {\bf identity component $\pmb{ Z_{\widehat{G}}(s)^\circ}$} & {\bf preimages of generators of $\pmb{Z_{\widehat{G}}(s)/Z_{\widehat{G}}(s)^\circ}$} \\\hline
	 1 & 1 & $(J,J,J,J),(-I,-I,I,I),(-I,I,-I,I),(K,K,K,K)$ \\\hline
	 
	 2 & $\left\{ (D(a)^{-1}, D(a)^{-1}, D(a), D(a)) \;:\; a\in \C^\times \right\}$ &
$ (-I,-I,I,I),(-I,I,-I,I),(J,J,J,J)$         \\\hline
	 
	 3 & $\left\{(A^\#, A^\#, A, A) \;:\;
	 A \in \SL(2,\C)\right\}$ & $(-I,-I,I,I),(-I,I,-I,I)$ \\\hline
	 
	 4 & $\left\{ (D(a)^{-1}, D(a), D(b)^{-1}, D(b)) \;:\; a,b\in \C^\times\right \}$ & $(-I,I,-I,I),(J,J,J,J)$ \\\hline
         5 & $\left\{ (D(a)^{-1}, D(b)^{-1}, D(a), D(b)) \;:\; a,b\in \C^\times\right \}$ & $(-I,-I,I,I),(J,J,J,J)$ \\\hline
         6 & $\left\{ (D(a)^{-1}, D(b), D(b)^{-1}, D(a)) \;:\; a,b\in \C^\times\right \}$ & $(-I,I,-I,I),(J,J,J,J)$ \\\hline

	 7 & $\left\{ (A^\#, A, B^\#, B) : A,B\in \SL(2,\C)\right\}$ & $(-I,I,-I,I)$ \\\hline
	 8 & $\left\{ (A^\#, B^\#, A, B) : A,B\in \SL(2,\C)\right\}$ & $(-I,-I,I,I)$ \\\hline
         9 & $\left\{ (A^\#, B, B^\#, A) : A,B\in \SL(2,\C)\right\}$ & $(-I,I,-I,I)$ \\\hline
         
	 10 & $\left\{ (D(abc)^{-1}, D(a), D(b), D(c)) \;:\; a,b,c\in \C^\times \right\}$ &  $(J,J,J,J)$ \\[2ex]
  \end{tabular}}\caption{The groups $Z_{\widehat{G}}(s)$: the entry $i$ is the label of the canonical semisimple set $\h_{\Pi_i}^\circ$ that contains $s$, as in Table \ref{tabW}; the notation $A^\#$,  $D(a)$, $I$, $J$, and $K$ is as explained in \eqref{eqmats}.}\label{tabZ}   
\end{table}

By Lemma \ref{lemSE}, up to $G_0$-conjugacy, every mixed element has the form  $p+e$ where
$p\in\Sigma$ is semisimple (as in Table \ref{tabW}) and $e\in \z_\g(p)' \cap \g_1$ is nilpotent; recall that $\z_\g(p)$ is reductive and its center consists of semisimple elements, so $e$ lies in the semisimple part $\z_\g(p)'$. Moreover, $p+e$ and $p+e'$ are $G_0$-conjugate if and only if $e$ and $e'$ are
$Z_{G_0}(p)$-conjugate. Writing $\a = \z_\g(p)'$ and $\a_i = \a \cap \g_i$ for $i=0,1$, we need to classify the nilpotent $Z_{G_0}(p)$-orbits in $\a_1$.

Lemma \ref{lemCent} shows that any element in the same row of Table \ref{tabW} has the same centraliser. Moreover, the proof of Lemma \ref{lemCent} shows that $Z_G(p)$ is connected with Lie algebra $\z_\g(p)$. It follows from $p\in\g_1$ that $\z_\g(p)=\z_{\g_0}(p)\oplus \z_{\g_1}(p)$; thus, $\z_\g(p)$ is a reductive graded Lie algebra with adjoint group $Z_G(p)$, and $Z_{G_0}(p)^\circ\leq Z_G(p)$ is the connected algebraic subgroup with Lie algebra $\z_{\g_0}(p)$.  We can now use  standard methods, such as described in \cite[Chapter 8.3.2]{graaf}, to classify the (finitely many) nilpotent $Z_{G_0}(p)^\circ$-orbits in $\a_1$; we have done this in GAP \cite{gap} using the GAP package SLA. It remains to reduce the obtained list up to conjugacy under the component group of $Z_{G_0}(p)$.

According to Table \ref{tabW}, we have 11 cases, namely $p\in \Sigma\cap\h_{\Pi_i}^\circ$ with $i\in\{1,\ldots,11\}$. For $i=1$ we have $\a=0$, so there are no nonzero nilpotent elements, and $p=0$ for $i=11$; in both cases there are no mixed elements. Thus, it remains to consider $i\in \{2,\ldots,10\}$; we report on the outcome of our computations:

\begin{iprf}
\item[{\bf Case $i=2$.}]
Here $\a=\ssl(2,\C)$ and there are two nilpotent
$Z_{\widehat{G}}(p)^\circ$-orbits in $\a_1$; these are interchanged by the
component group. In conclusion, one nilpotent orbit remains with representative
\[n_{2,1}=\mye{0011}.\]

\item[{\bf Case $i=3$.}]Here $\a=\ssl(3,\C)$ and its grading is induced by an outer automorphism of $\a$. There are two nilpotent $Z_{\widehat{G}}(p)^\circ$-orbits in $\a_1$ and the component group acts trivially. In conclusion, there are two nilpotent orbits with  representatives \[n_{3,1}=\mye{0011}\quad\text{and}\quad n_{3,2}=\mye{0111}+\mye{1011}+\mye{0010}+\mye{0001}.\]

\item[{\bf Case $i=4,5,6$.}]
First let $i=4$. Here $\a=\ssl(2,\C)^2$ and  there are eight nilpotent
$Z_{\widehat{G}}(p)^\circ$-orbits in $\a_1$. Up to the action of the
component group, four of them remain, with  representatives 
\begin{align*}n_{4,1}&=\mye{0110}+\mye{1010},& n_{4,2}&=\mye{0110}+\mye{0101},& n_{4,3}&=\mye{0110},& n_{4,4}&=\mye{0101}.
\end{align*}
For $i=5,6$, Remark \ref{remSym} yields $n_{5,1}=\mye{0110}+\mye{1100}$, $n_{5,2}=\mye{0110}+\mye{0011}$, $n_{5,3}=\mye{0110}$, $n_{5,4}=\mye{0011}$ and $n_{6,1}=\mye{0011}+\mye{1010}$, $n_{6,2}=\mye{0011}+\mye{0101}$, $n_{6,3}=\mye{0011}$, $n_{6,4}=\mye{0101}$, respectively.
\smallskip

\item[{\bf Case $i=7,8,9$.}]
First let $i=7$. Here $\a=\ssl(4,\C)$ and its grading is
induced by an outer automorphism of $\a$;  we have that $\a_0=\ssl(2,\C)^2$ and  there are six nilpotent
$Z_{\widehat{G}}(p)^\circ$-orbits in $\a_1$. The component group acts trivially, and representatives of nilpotent orbits are 
\begin{align*}n_{7,1}&=\mye{1101}+\mye{1011}+\mye{1000}+\mye{0001},&n_{7,2}&=\mye{1101}+\mye{1010}+\mye{0001},\\ n_{7,3}&=\mye{1011}+
\mye{1000}+\mye{0101},& n_{7,4}&=\mye{1011}+\mye{1000},\\ n_{7,5}&=\mye{1101}+\mye{0001},& n_{7,6}&=\mye{1001}.
\end{align*}
Remark \ref{remSym} yields $n_{8,1}=\mye{1011}+\mye{1101}+\mye{1000}+\mye{0001}$, $n_{8,2}=\mye{1011}+\mye{1100}+\mye{0001}$, $n_{8,3}=\mye{1101}+\mye{1000}+\mye{0011}$, $n_{8,4}=\mye{1101}+\mye{1000}$, $n_{8,5}=\mye{1011}+\mye{0001}$, $n_{8,6}=\mye{1001}$ for the case $i=8$, and $n_{9,1}=\mye{1101}+\mye{1110}+\mye{1000}+\mye{0100}$, $n_{9,2}=\mye{1101}+\mye{1010}+\mye{0100}$, $n_{9,3}=\mye{1110}+\mye{1000}+\mye{0101}$, $n_{9,4}=\mye{1110}+\mye{1000}$,  $n_{9,5}=\mye{1101}+\mye{0100}$, $n_{9,6}=\mye{1100}$ for $i=9$.
\smallskip

\item[{\bf Case $i=10$.}]
Here $\a=\ssl(2,\C)^3$ and  there are 26 nilpotent $Z_{\widehat{G}}(p)^\circ$-orbits in $\a_1$. Up to the action of the
component group, 13 of them remain, with representatives
\begin{align*}
  n_{10,1}&=\mye{1100}+\mye{1010}+\mye{0110}, &n_{10,2}&=\mye{1010}+\mye{0110}, &  n_{10,3}&=\mye{1010}+\mye{0110}+\mye{0011},\\
  n_{10,4}&=\mye{1100}+\mye{0110},& n_{10,5}&=\mye{0110}, & n_{10,6}&=\mye{0110}+\mye{0011},\\
  n_{10,7}&=\mye{1100}+\mye{0110}+\mye{0101}, & n_{10,8}&=\mye{0110}+\mye{0101},& n_{10,9}&=\mye{0110}+\mye{0101}+\mye{0011},\\
  n_{10,10}&=\mye{1100}+\mye{1010},& n_{10,11}&=\mye{1010},& n_{10,12}&=\mye{1010}+\mye{0011},\\
  n_{10,13}&=\mye{0011}.
\end{align*}
\end{iprf}

\noindent In conclusion, we have shown:

\begin{theorem}\label{thmME}
For $i\in\{2,\ldots,10\}$ let $\Sigma_i$ be a set of $G_0$-conjugacy representatives of semisimple elements in $\h_{\Pi_i}^\circ$ as specified in Table \ref{tabW} and Remark \ref{GammaConjugacy}. Up to $G_0$-conjugacy, the mixed type elements in $\g_1$ are the elements $s+n_{i,j}$ where $i\in\{2,\ldots,10\}$, $s\in \Sigma_i$, and $n_{i,j}$ as specified in Case $i$ above.
\end{theorem}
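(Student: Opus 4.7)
The plan is to apply Lemma \ref{lemSE}, which shows that every mixed element of $\g_1$ is $G_0$-conjugate to some $s+n$ with $s\in\Sigma$ and $n$ a nonzero nilpotent in $\z_{\g_1}(s)$, and that two such normal forms $s+n,s+n'$ are $G_0$-conjugate if and only if $n$ and $n'$ lie in the same $Z_{G_0}(s)$-orbit. Together with Theorem \ref{thmSE}, this reduces the whole classification to: for each $i\in\{1,\dots,11\}$, enumerate the nilpotent $Z_{G_0}(p_i)$-orbits in the graded reductive subalgebra $\z_\g(p_i)\cap\g_1$, for some fixed $p_i\in \h_{\Pi_i}^\circ$; Lemma \ref{lemCent} guarantees independence from the choice of $p_i$. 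Rows $i=1$ and $i=11$ contribute nothing: in the first $\z_\g(p_1)'=0$ so no nonzero nilpotents are available, and in the second $p_{11}=0$ so the resulting elements are purely nilpotent, not mixed. This leaves $i\in\{2,\dots,10\}$.

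For each such $i$ the explicit reductive algebra $\a=\z_\g(p_i)'$ is read off from the last column of Table \ref{tabW}, and the centraliser $Z_{\widehat{G}}(p_i)$ from Proposition \ref{propCent} and Table \ref{tabZ}. The connected group $Z_{G_0}(p_i)^\circ$ is the adjoint group of $\a_0=\a\cap\g_0$ under the $\Z/2\Z$-grading $\a=\a_0\oplus\a_1$ inherited from $\g$, so the nilpotent $Z_{G_0}(p_i)^\circ$-orbits in $\a_1$ can be enumerated by the carrier-algebra algorithm of \cite[Chapter~8.3.2]{graaf}, which is implemented in the GAP package SLA and reduces to a finite combinatorial problem intrinsic to $\a$. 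The final step per row is to fuse these connected-group orbits under the finite component group $Z_{\widehat{G}}(p_i)/Z_{\widehat{G}}(p_i)^\circ$, whose generator preimages are recorded in the last column of Table \ref{tabZ}; fusion is detected by exhibiting, for each pair of suspected representatives, an explicit conjugating element from that component group.

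Much of the nine-row bookkeeping collapses by invoking Remark \ref{remSym}: the automorphisms $\pi_{(2,3)}$ and $\pi_{(2,4)}$ of $\g$ permute $u_2,u_3,u_4$, so the rows $i\in\{4,5,6\}$ are mutually interchanged, and similarly for $i\in\{7,8,9\}$. It therefore suffices to carry out the orbit enumeration for $i\in\{2,3,4,7,10\}$ and obtain the representatives for the remaining four rows by applying the appropriate $\pi_\sigma$; this is exactly how the families $\{n_{5,j}\},\{n_{6,j}\},\{n_{8,j}\},\{n_{9,j}\}$ are produced in the text. The hard part will be case $i=10$, where $\a\cong\ssl(2,\C)^3$ admits $26$ $Z_{G_0}(p)^\circ$-orbits of nilpotent elements in $\a_1$ that must be cut down to $13$ classes under the single component-group generator coming from $(J,J,J,J)$; keeping the identifications straight here is the most error-prone step and benefits from cross-checking via discrete invariants attached to each orbit. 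A smaller but analogous care is needed in case $i=4$ to separate $n_{4,3}$ from $n_{4,4}$, which become distinct only after the component-group action is correctly accounted for.
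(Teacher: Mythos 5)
Your proposal follows essentially the same route as the paper's proof: reduce via Lemma \ref{lemSE} and Lemma \ref{lemCent} to classifying nilpotent $Z_{G_0}(p)$-orbits in $\z_\g(p)'\cap\g_1$ for one representative per row of Table \ref{tabW}, enumerate the $Z_{G_0}(p)^\circ$-orbits by the standard methods of \cite[Chapter 8.3.2]{graaf} in GAP/SLA, fuse under the component group read off from Table \ref{tabZ}, and transport the results to rows $5,6,8,9$ via the permutation automorphisms of Remark \ref{remSym}. The case analysis (including the dismissal of $i=1,11$ and the orbit counts $26\to 13$ for $i=10$) matches the paper.
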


%%%%%%%%%%%%%%%%%%%%%5

\section{A classification up to $\mathcal{S}$-conjugacy}\label{secDjok}
\noindent This section has two aims. First, we compare our classifications with the families determined by Verstraete et al.\ \cite{verstraete}; the latter families have been reconsidered and corrected by  Chterental \& Djokovi\v{c} \cite{djok}. Second, we describe our classification of nilpotent, semisimple, and mixed elements up to $\mathcal{S}$-conjugacy where $\mathcal{S}= {\rm Sym_4}\ltimes \SL_2(\C)^4$. We start with a preliminary section on deciding conjugacy of elements.

%%%%%%%%%%%  

\subsection{Deciding conjugacy}\label{sec:conj}
\noindent Let $u,v\in \mathcal{H}_4$. In our classification below we need to decide whether $u$ and $v$  lie in the same
$\widehat{G}$-orbit and, if so, to find a $g\in \widehat{G}$ with $gv=u$.
A general method for this is based on the computational technique
of Gr\"obner bases \cite{CLO15}: the relation $gv=u$ gives linear
relations on the entries of the four matrices in $g$; to these relations we add the polynomials
that express that the determinants of the matrices are 1. We then compute a Gr\"obner basis of the ideal generated by the resulting
polynomials. This Gr\"obner basis is trivial (i.e., consists only of 1)
if and only if there is no solution. If the Gr\"obner basis is not trivial,
then in many cases it can be used to effectively solve the equations and find
a solution.
A related problem is to find, given a $u\in \mathcal{H}_4$, an element $v$
in our classification to which $u$ is conjugate to.

First, suppose $u$ is semisimple.  It follows from \cite[Theorem 3]{vinberg} that two semisimple 
$u$ and $v$ are conjugate if and only if $\mathcal{F}(u) = \mathcal{F}(v)$, where $\mathcal{F}$ is defined in \eqref{eqF} below (it maps $u$ to the values of the generating invariants of $\mathbb{C}[\g_1]^{\wG}$). We compute
$\mathcal{F}(u)$ with Table \ref{tabSSinv}, and use  a Gr\"obner basis
computation to find an element $v$ in one of the 10 semisimple classes with
$\mathcal{F}(u) = \mathcal{F}(v)$; we then find a conjugating element via the above method. If $u$ is nilpotent, then we have to perform at most 30 Gr\"obner basis
computations to find the element in Table \ref{tabNP} that is conjugate to
$u$. By computations in $\g$, we can also reduce the number of candidates
of nilpotent elements in Table \ref{tabNP} that are possibly conjugate
to $u$. For example, conjugate elements have centralisers in $\g_0$ of the same dimension. Moreover, the theory
of $\ssl_2$-triples can be used to reduce the number of candidates, see \cite[\S 8.3.2]{graaf}. If $u$ is mixed, then we first identify its semisimple part
with an element in our classification;  subsequently we deal with the
nilpotent part.

%%%%%%%%%%
\subsection{Classification results}
Before we describe our classification, we first recall the classification in \cite{djok} in the language of our paper:

\begin{theorem}[Theorem 3.6 in \cite{djok}]\label{thmDjok}
The $\mathcal{S}$-orbits on $\g_1$ are classified by the nine families $D_1$,\ldots,$D_9$ in Table~\ref{tabDjok}.  Elements belonging to different families are not equivalent under $\mathcal{S}$-operations. However, within the same family, different families of the parameters may give elements belonging to the same $\mathcal{S}$-orbit.
\end{theorem}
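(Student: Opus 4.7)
Since Theorem \ref{thmDjok} restates \cite[Theorem 3.6]{djok}, a proof is already in the literature; the plan here is to \emph{verify} this result from our own classification of $\wG$-orbits (Theorems \ref{thmSE} and \ref{thmME}, together with Table \ref{tabNP}) and, in doing so, to set up the dictionary between the two sets of representatives that will be needed in the rest of Section \ref{secDjok}. The guiding observation is that each family $D_1,\ldots,D_9$ must collapse to a finite union of our 87 $\wG$-classes glued together by the subgroup ${\rm Sym}_4\leq\mathcal{S}$ via the automorphisms $\pi_\sigma$ of Remark \ref{remSym}.

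First I would process each family $D_j$ separately. Writing a parametrised representative $v(D_j)$, I would compute the values of the generating $\wG$-invariants on $v(D_j)$ using Table \ref{tabSSinv} and apply the semisimple matching of Section \ref{sec:conj}: a Gröbner-basis search locates a canonical semisimple set $\h_{\Pi_i}^\circ$ and a point $s\in\Sigma\cap\h_{\Pi_i}^\circ$ with the same invariants as the semisimple part of $v(D_j)$. A second Gröbner-basis computation produces a $g\in\wG$ with $gv(D_j)$ having semisimple part $s$, and the nilpotent remainder $gv(D_j)-s$ is then reduced to one of the representatives $n_{i,k}$ of Section \ref{secMixed} using the component-group action recorded in Table \ref{tabZ}. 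The output of this loop is, for each $j$, an explicit map from $D_j$ to our list of 87 classes.

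Conversely I would run the inverse loop: for each of the 87 $\wG$-class representatives $x$, I would enumerate the $24$ translates $\pi_\sigma(x)$ for $\sigma\in{\rm Sym}_4$ and apply the same identification procedure, to check that the $\mathcal{S}$-orbit of $x$ meets exactly one family $D_j$. Non-equivalence between distinct families under $\mathcal{S}$ is then extracted from invariant theory: because the generating $\wG$-invariants are permuted by ${\rm Sym}_4$ in a controlled way (essentially through permutations of the coordinates $\lambda_1,\ldots,\lambda_4$ on $\h$), the multiset of invariant tuples $\{\mathcal{F}(\pi_\sigma(x)):\sigma\in{\rm Sym}_4\}$ is an $\mathcal{S}$-invariant of the class, and I would verify by direct substitution that this multiset takes distinct values on elements of distinct $D_j$.

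The principal obstacle is the bookkeeping inside each parametrised family: different parameter tuples inside a single $D_j$ may define $\mathcal{S}$-equivalent elements, and the finite group encoding these identifications is a subgroup of $\Gamma_{\Pi_i}\rtimes{\rm Sym}_4$ extended by the component group of $Z_{\wG}(s)$ acting on the nilpotent remainder. Reading this group off Table \ref{tabZ} together with the explicit action of ${\rm Sym}_4$ on the $u_i$ (which, by Remark \ref{remSym}, fixes $u_1$ and permutes $u_2,u_3,u_4$ as ${\rm Sym}_3$) is a finite computation that can be carried out in GAP; once it is in place, the matching with $D_1,\ldots,D_9$ follows directly from the two identification loops above.
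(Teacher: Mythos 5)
First, a point of bookkeeping: the paper does not prove Theorem \ref{thmDjok} at all --- it is quoted from Chterental \& Djokovi\v{c} \cite{djok} and used as external input. The place where the paper does work of the kind you describe is the proof of Theorem \ref{thmSConj}, and there your two identification loops are essentially what the authors carry out: each of the 31 nilpotent, 10 semisimple and 46 mixed $\wG$-classes is matched, via Gr\"obner-basis conjugacy tests and the ${\rm Sym}_4$-action of Remark \ref{remSym}, to exactly one family $D_j$ (Tables \ref{tabNP}, \ref{tabSSS}, \ref{tabMT}). So your overall strategy is a viable verification of the statement, and it is close in spirit to what the paper actually does one theorem later.

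The genuine gap is in your proposed mechanism for showing that distinct families are not $\mathcal{S}$-equivalent. You want to separate the $D_j$ by the multiset $\{\mathcal{F}(\pi_\sigma(x)) : \sigma\in{\rm Sym}_4\}$, but $\mathcal{F}$ cannot do this job: as shown in Section \ref{secApps}, $\mathcal{F}(p+e)=\mathcal{F}(p)$ for any element with semisimple part $p$, i.e.\ the generating invariants see only the semisimple part. Consequently $D_7$, $D_8$ and $D_9$ (all with semisimple part $0$), and more generally all nine nilpotent representatives $N_1,\ldots,N_9$, have the identical invariant multiset $\{(0,0,0,0)\}$; the same collision occurs whenever two families share a semisimple type but differ in their nilpotent remainder. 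The paper's non-equivalence claims are instead established by direct computation: for $j\neq k$ and every $\sigma\in{\rm Sym}_4$ one checks that the polynomial system $g(\pi_\sigma(N_j))=N_k$, $g\in\wG$, has a trivial Gr\"obner basis (supplemented, where convenient, by discrete $\mathcal{S}$-invariants such as centraliser dimension, cf.\ Tables \ref{tabZ} and \ref{tabNilZ}). You would need to replace your invariant-theoretic separation step by such explicit non-conjugacy tests; with that substitution the rest of your plan goes through.
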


\renewcommand\arraystretch{1.2} 
\begin{table}[ht]\scalebox{0.85}{
    \begin{tabular}{rl}
 {\bf fam.} & {\bf elements}\\\hline     
  $D_1$ & $S_1(a,b,c,d)+N_1$ where  $S_1(a,b,c,d)=\tfrac{a+d}{2} u_1 + \tfrac{b-c}{2} u_2 + \tfrac{b+c}{2}u_3 + \tfrac{a-d}{2} u_4$ and $N_1=0$\\\hline
 
  $D_2$ & $S_2(a,b,c)+N_2$ where  $S_2(a,b,c)=\tfrac{a+c}{2} u_1 + \tfrac{b-c}{2} u_2 + \tfrac{b+c}{2}u_3 + \tfrac{a-c}{2} u_4$ and \\
  & $N_2=\tfrac{\imath}{2}\left(u_3+u_4-u_2-u_1+  \mye{1110}+\mye{0001}+\mye{1000}+\mye{0111}-\mye{1101}-\mye{0010}-\mye{1011}-\mye{0100} \right)$\\\hline

  $D_3$ & $S_3(a,b)+N_3$ where  $S_3(a,b)=\tfrac{a}{2} u_1 + \tfrac{b}{2}u_2 + \tfrac{b}{2}u_3 + \tfrac{a}{2}u_4$ and \\ 
  & $N_3=\tfrac{1}{2}\left( u_3-u_2+\mye{0010}+\mye{1101}-\mye{1110}-\mye{0001}\right)$\\\hline

  $D_4$ & $S_4(a,b)+N_4$ where $S_4(a,b)=\tfrac{a+b}{2}u_1+ bu_3 +\tfrac{a-b}{2} u_4$ and\\
  & $N_4=\imath(\mye{1001}-\mye{0110})+\tfrac{1}{2}\left(\mye{1101}+\mye{0100}+\mye{1011}+\mye{0010}-\mye{1110}-\mye{0001}-\mye{1000}-\mye{0111}\right)$\\\hline

  $D_5$ &  $S_5(a)+N_5$ where $S_5(a)=au_1 + au_3 $ and  $N_5=2\imath\left(\mye{0001}+\mye{0110}-\mye{1011}\right)$\\\hline
  $D_6$ & $S_6(a)+N_6$ where $S_6(a)=\tfrac{a}{2} u_1 +\tfrac{a}{2}u_2+\tfrac{a}{2}u_3+\tfrac{a}{2}u_4$ and \\
  & $N_6=\tfrac{\imath+1}{2}(\mye{0010}+\mye{1101}-u_2)+\tfrac{\imath-1}{2}(\mye{1110}+\mye{0001}-u_3)-\tfrac{\imath}{2}(\mye{1011}+\mye{0100}+\mye{1000}+\mye{0111}-u_1-u_4)$\\\hline

  $D_7$ & $S_7+N_7$ where  $S_7=0$ and\\ &$N_7=(\mye{1010}-\mye{1001}+\mye{0011}+\mye{0000})+(\imath+1)(\mye{0110}+\mye{0101})-\imath(\mye{1011}+\mye{1000}+\mye{0010}-\mye{0001})$\\\hline
  $D_8$ & $S_8+N_8$ where  $S_8=0$ and\\ &$N_8=\tfrac{\imath+1}{2}u_1-\tfrac{\imath-1}{2}u_4+\tfrac{\imath-1}{2}(\mye{1110}+\mye{0001})-\tfrac{\imath+1}{2}(\mye{1101}+\mye{0010})$\\&{\color{white}$N_8=\;$}$+\tfrac{1}{2}(\mye{1011}+\mye{0110}+\mye{0101}+\mye{1000})+\tfrac{1-2\imath}{2}(\mye{0111}+\mye{1010}+\mye{1001}+\mye{0100})$\\\hline
  $D_9$ & $S_9+N_9$ where  $S_9=0$ and \\ & $N_9=\tfrac{1}{2}(\mye{1111}+\mye{1100}+\mye{1011}+\mye{1000}+\imath\mye{1110}+\imath\mye{1101}-\imath\mye{1010}+\imath\mye{1001})$
   \\&{\color{white}$N_9=\;$}$+\tfrac{1}{2}(\mye{0111}+\mye{0100}+\mye{0011}+\mye{0000}+\imath\mye{0110}+\imath\mye{0101}-\imath\mye{0010}+\imath\mye{0001})$
\end{tabular}
}\\[1ex]\caption{The nine families of Theorem \ref{thmDjok} with parameters $a,b,c,d\in \C$.}\label{tabDjok}
\end{table}

The next theorem identifies in which of these nine families our $G_0$-orbit representatives lie, up to $\mathcal{S}$-conjugacy; we also present a new, complete and irredundant classification up to $\mathcal{S}$-conjugacy.
\pagebreak
\begin{theorem}\label{thmSConj}
 \begin{ithm} 
 \item Up to $\mathcal{S}$-conjugacy, the nilpotent orbits in $\g_1$ are the elements $N_1,\ldots,N_9$ in Table \ref{tabDjok}.
 \item Up to $\mathcal{S}$-conjugacy, the semisimple orbits in $\g_1$ are the elements in Table \ref{tabSSS}.
 \item Up to $\mathcal{S}$-conjugacy, the mixed  elements in $\g_1$ are the elements  in Table \ref{tabMT}.
 \end{ithm}\noindent The right column in Table \ref{tabSSS} and the second column in Table \ref{tabMT}  indicate to which family $D_i$ (as in Table \ref{tabDjok}) the element is $\mathcal{S}$-conjugate to. Tables \ref{tabZ}, \ref{tabMT}, and \ref{tabNilZ} contain information about the centralisers in $\wG$.
\end{theorem}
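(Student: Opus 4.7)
The strategy is to pass from the $\widehat{G}$-orbit classifications (Theorems \ref{thmSE}, \ref{thmME}, and Table \ref{tabNP}) to the $\mathcal{S}$-orbit classification by collapsing $\widehat{G}$-orbits that are permuted by the extra $\mathrm{Sym}_4$-action of Remark \ref{remSym}. Each $\pi_\sigma$ is a grading-preserving automorphism of $\g$ normalising $G_0$, hence sends $\widehat{G}$-orbits in $\g_1$ to $\widehat{G}$-orbits; therefore $\mathcal{S}$-orbits on $\g_1$ are precisely $\mathrm{Sym}_4$-orbits on the set of $\widehat{G}$-orbits. Throughout the argument we apply the conjugacy-decision procedure from Section \ref{sec:conj} (Gr\"obner bases together with, for semisimple elements, evaluation of the generating invariants) both to detect equivalences internal to our lists and to exhibit explicit elements of $\mathcal{S}$ conjugating our representatives to those of Chterental \& Djokovi\'c.

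For part (a), we take the 31 nilpotent representatives of Table \ref{tabNP}, compute the images under a set of generators of $\mathrm{Sym}_4$, and for each pair of candidate orbits test $\widehat{G}$-conjugacy. Orbits with different $\g_0$-centraliser dimensions or different associated $\ssl_2$-triple weighted Dynkin diagrams need not be checked, so only a few Gr\"obner basis computations remain. The resulting orbits fuse into nine classes, and for each class we solve $g\cdot n=N_j$ with $g\in \mathcal{S}$ to match it with a Djokovi\'c representative $N_j$, yielding the nilpotent part of Table \ref{tabDjok}.

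For part (b), the automorphisms $\pi_{(2,3)}$ and $\pi_{(2,4)}$ fix $u_1$ and swap pairs in $\{u_2,u_3,u_4\}$, so the three $2{\rm A}_1$-cases $i=4,5,6$ of Table \ref{tabW} are fused by $\mathrm{Sym}_4$ into a single $\mathcal{S}$-type, and likewise the three ${\rm A}_3$-cases $i=7,8,9$; the remaining cases $i\in\{1,2,3,10,11\}$ are setwise preserved. Within each fused family we enlarge the finite group $\Gamma_{\Pi_i}$ acting on $\h_{\Pi_i}^\circ$ by the stabiliser in $\mathrm{Sym}_4$ of the root-subsystem type, and reduce the parameter tuples modulo the resulting finite action (the analogue of Proposition \ref{prop:gampsi}); within the non-fused families, the same procedure is applied using the pointwise $\mathrm{Sym}_4$-stabiliser of $\h_{\Pi_i}$. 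This produces Table \ref{tabSSS}, and Gr\"obner bases then identify each representative with an element of the appropriate family $D_j$.

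For part (c), by Lemma \ref{lemSE} every mixed element has the form $s+n$ with $s$ semisimple and $n$ a nonzero nilpotent in $\z_\g(s)\cap \g_1$. Starting from the representatives $s\in\Sigma$ and the list of nilpotent representatives $n_{i,j}$ obtained before Theorem \ref{thmME}, we determine, for each $\mathcal{S}$-representative $s$ from part (b), the subgroup of $\mathrm{Sym}_4$ stabilising $s$ up to $\widehat{G}$-conjugacy; this subgroup acts on the set of $Z_{G_0}(s)$-orbits of nilpotents in $\z_\g(s)\cap \g_1$, and we reduce accordingly. Identification with the Djokovi\'c families is again done by Gr\"obner bases, and the centralisers recorded in Tables \ref{tabZ}, \ref{tabMT}, and \ref{tabNilZ} are read off from the ones already computed in Proposition \ref{propCent} and Case-$i$ computations, intersected with the $\mathrm{Sym}_4$-stabiliser data. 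The principal obstacle lies in part (b): when $\pi_\sigma$ maps $\h_{\Pi_i}^\circ$ into $\h_{\Pi_j}^\circ$, it induces an affine change of parameters that must be tracked precisely to avoid double-counting, and an analogous bookkeeping is required for the residual permutations of parameters inside a single fused family; once these actions are written down explicitly the reduction is finite and algorithmic.
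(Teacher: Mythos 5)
Your proposal follows the paper's proof essentially step for step: fuse the $\widehat{G}$-classifications of Table \ref{tabNP} and Theorems \ref{thmSE} and \ref{thmME} under the $\mathrm{Sym}_4$-action of Remark \ref{remSym}, decide conjugacy by the Gr\"obner-basis method of Section \ref{sec:conj} and, on the Cartan subspace, by reduction to a finite (enlarged Weyl-group) action on the parameters as in Proposition \ref{prop:gampsi}, and then match representatives with the families $D_1,\ldots,D_9$; the reorganisation in part (c), where you let the stabiliser of $s$ act on the nilpotent orbits in $\z_\g(s)\cap\g_1$ rather than performing the paper's two-step fusion, is an equivalent bookkeeping of the same computation. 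The one phrase to correct is ``pointwise $\mathrm{Sym}_4$-stabiliser of $\h_{\Pi_i}$'' in part (b): the relevant group is the setwise stabiliser, since for $i=1$ the pointwise stabiliser of $\h$ is only the Klein four-group (which acts trivially on $\h$ by Remark \ref{remSym}) and would miss the $\mathrm{Sym}_3$ permuting $u_2,u_3,u_4$, which genuinely enlarges the identification group from the Klein-group signed permutations of Remark \ref{GammaConjugacy} to all signed permutations as recorded in Table \ref{tabSSS}.
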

\begin{proof}
  As before,  all the direct computations mentioned in this proof have been carried out in GAP \cite{gap} and its interface to Singular \cite{singular}. We briefly comment on our approach; let $s,t\in \g_1$ and let $\sigma\in{\rm Sym}_4$. By abuse of notation, we also denote by $\sigma$ the induced automorphism of $\g$, see Remark \ref{remSym}. It is straightforward to compute the image $\sigma(s)$.  As in the proof of Proposition~\ref{propCent}, we use Gr\"obner basis techniques to determine $\wG$-conjugacy of $\sigma(s)$ and $t$: for example, if $g=(A,B,C,D)\in\wG$ is a general element with $16$ indeterminates $a_{ij},b_{i,j},c_{ij},d_{ij}$, then the command {\small{\tt HasTrivialGroebnerBasis}} allows us to decide quickly whether a solution to $g(\sigma(s))=t$ exists. This approach can also be used if $s$ and $t$ are semisimple or mixed elements defined by parameters $\lambda_i$ and $\lambda_i'$: if the Gr\"obner basis is trivial, then the elements are not conjugate; if nontrivial, then the elements are \emph{potentially conjugate}. In the latter situation one still has to determine whether a solution exists that satisfies the conditions on the parameters $\lambda_i$ and $\lambda_i'$. As explained below, we usually reduce $\wG$-conjugacy testing to  testing of $W$-conjugacy, see Proposition~\ref{prop:gampsi}; the latter is a finite explicit calculation.

 \begin{iprf}
\item Table \ref{tabDjok} yields nine elements $N_1,\ldots,N_9$, and a direct calculation shows that they are all nilpotent. Another direct computation (using Gr\"obner bases) shows that all these elements are not $\mathcal{S}$-conjugate, as expected by Theorem \ref{thmDjok}. It has been determined in \cite{nilp} that there are 31 nilpotent $G_0$-orbits in $\g_1$; the corresponding classification over the reals has been presented in \cite{nilporb}. In Table \ref{tabNP} (left) we list representatives for the nilpotent orbits (taken from \cite[Table~I]{nilporb}) and determine (using Gr\"obner bases) to which nilpotent element $N_1,\ldots, N_9$ the element is $\mathcal{S}$-conjugate to; the claim follows. 

\item Recall that $u_1,\ldots,u_4$ span the Cartan subspace $\h$, which shows that all the elements $S_1,\ldots,S_9$ in Table \ref{tabDjok} are semisimple. By Theorem \ref{thmSE}, every semisimple element is conjugate to an element in family~$D_1$. It remains to reduce our classification of semisimple elements (as given in Table \ref{tabW}) up to $\mathcal{S}$-conjugacy. Due to Remark \ref{remSym}, it suffices to consider elements in $\h_{\Pi_i}^\circ$ for $i\in\{1,2,3,4,7,10\}$. First, we note that if $s\in\h_{\Pi_i}^\circ$ and $t\in\h_{\Pi_j}^\circ$ with distinct $i,j\in \{1,2,3,4,7,10\}$,  then $s$ and $t$ are not $\mathcal{S}$-conjugate: this follows because $s$ and $t$ have centralisers of different dimensions (see Table \ref{tabW}) and because the permutation action of every $\sigma\in{\rm Sym}_4$ on $\g_1$ extends to Lie algebra automorphisms of $\g$ (cf.\ Remark \ref{remSym}). Thus, it remains to determine when $s,t\in\h_{\Pi_i}^\circ$ are $\mathcal{S}$-conjugate;  note that $G_0$-conjugacy is already determined in Table \ref{tabW}. We also note that  ${\rm Sym}_4$ stabilises  $\h$, so $G_0$-conjugacy of ${\rm Sym}_4$-conjugate elements in $\h$ can be decided by considering the action of $W$, see Proposition \ref{prop:gampsi}.  For example, consider $i=3$. Elements $s=\lambda_1(u_1-u_2)+\lambda_2(u_1-u_3)$ and $t=\lambda_1'(u_1-u_2)+\lambda_2'(u_1-u_3)$ are $G_0$-conjugate if and only if $(\lambda_1',\lambda_2')=\pm(\lambda_1,\lambda_2)$. We now consider every  ${\rm Sym}_4$-conjugate of $s$, for example $s'=\lambda_1(u_1-u_4)+\lambda_2(u_1-u_2)$, and then determine $\lambda_1'$ and $\lambda_2'$ such that $s'$ is $W$-conjugate to $\lambda_1'(u_1-u_2)+\lambda_2'(u_1-u_3)$. In this particular example, $(\lambda_1',\lambda_2')=(\lambda_1+\lambda_2,-\lambda_1)$, which shows that $\lambda_1(u_1-u_2)+\lambda_2(u_1-u_3)$ is $\mathcal{S}$-conjugate to $(\lambda_1+\lambda_2)(u_1-u_2)-\lambda_1(u_1-u_3)$.   Doing this for all ${\rm Sym}_4$-conjugates of semisimple representatives in $\h_{\Pi_i}^\circ$ allows us to determine the conditions for $\mathcal{S}$-conjugacy; the result is listed in Table \ref{tabSSS}.
    
\item Let $x=s+n_{i,j}$ be a mixed element as in Theorem \ref{thmME}. Due to Remark \ref{remSym}, up to $\mathcal{S}$-conjugacy, we can assume that $i\in\{2,3,4,7,10\}$. (There are no mixed elements for $i=1$.)  As in part a), we first determine to which nilpotent element $N_1,\ldots,N_9$ the element $n_{i,j}$ is $\mathcal{S}$-conjugate to; this is listed in Table~\ref{tabNP} (right). If we have determined that $n_{i,j}$ is $\mathcal{S}$-conjugate to $N_k$, then we use Gr\"obner basis computations to verify that $x$ is indeed $\mathcal{S}$-conjugate to an element of the form $S_k+N_k$, hence, up to $\mathcal{S}$-conjugacy, $x$ lies in family $D_k$. Recall that ${\rm Sym}_4$ preserves $\h$ and the action of every $\sigma\in{\rm Sym}_4$ on $\g_1$ extends to a Lie algebra automorphism of $\g$. In particular, it follows that $\sigma(x)=\sigma(s)+\sigma(n_{i,j})$ is the Jordan decomposition of $\sigma(x)$. The centraliser information in Table \ref{tabW} now implies that the only $\mathcal{S}$-conjugacies between elements of the form $s+n_{i,j}$ (with  $i\in\{2,3,4,7,10\}$, $s\in\h_{\Pi_i}^\circ\setminus\{0\}$, and permissible~$j$) are between elements whose semisimple parts lie in the same component $\h_{\Pi_i}^\circ$. Thus, it remains to decide $\mathcal{S}$-conjugacy of elements $x=s+n_{i,j}$ and $y=t+n_{i,\ell}$ where $i\in \{2,3,4,7,10\}$, $s,t\in \h_{\Pi_i}^\circ$ as in Table \ref{tabW}, and $n_{i,j}$ and $n_{i,\ell}$ in the same family $D_k$. In this situation, explicit Gr\"obner basis computations show that if $s=t$, then  $x$ and $y$ are $\mathcal{S}$-conjugate. It therefore remains to consider $\mathcal{S}$-conjugacy between elements \[x=s+n_{i,j}\quad\text{and}\quad y=t+n_{i,j}\] with $s,t\in\h_{\Pi_i}^\circ$ as in Table \ref{tabW}; by what is said in the previous sentence, for each $i$ we only have to consider one $j$ for each class $D_k$.  For this we consider every possible ${\rm Sym}_4$-conjugate of $x$, say $x'=\sigma(x)=\sigma(s)+\sigma(n_{i,j})$, and check whether $x'$ is potentially $G_0$-conjugate to an element $t+n_{i,j}$ with $t\in\h_{\Pi_i}^\circ$: for this we first check whether $\sigma(n_{i,j})$ is potentially conjugate to $n_{i,j}$, and if so, then we test the same for $\sigma(x)$ and $y$. If the test is positive, then we check whether $\sigma(s)$ is $W$-conjugate to an element $t$, cf.\ the proof of part b). We briefly comment on each case.

  If $i=7$, then $s=\lambda_1 u_1$ and $t=\lambda_1'u_1$, and a direct computation shows that $x$ and $y$ are $\mathcal{S}$-conjugate if and only if $s=\pm t$; the same holds for $i=10$. If $i=4$, then $s=\lambda_1u_1+\lambda_2 u_4$ and $t=\lambda_1'u_1+\lambda_2' u_4$. A computation shows that $\sigma(x)$ and $y$ are potentially $G_0$-conjugate  if and only if $\sigma(s)=s$. More precisely, if $j=3$, then $\sigma(x)=s+\sigma(n_{4,3})$, and the latter can be shown to be $G_0$-conjugate to $s+n_{4,3}$. If $j=1$, then  $\sigma(x)=s+\sigma(n_{4,1})$ is $G_0$-conjugate to $s+n_{4,1}$.  In conclusion, for $i=4$ it follows that $x$ and $y$ are $\mathcal{S}$-conjugate if and only if $s$ and $t$ are $G_0$-conjugate. If $i=2$, then for every permutation $\sigma$, the element $\sigma(x)=\sigma(s)+\sigma(n_{2,1})$ has mixed type and must be $G_0$-conjugate to some $t+n_{2,1}$; this follows from the centraliser dimension in Table \ref{tabW} and Theorem \ref{thmME}. We can determine the possible transformations $s\to\sigma(s)\to t$ by using the same computations as in b).  Now consider $i=3$. One can show that every ${\rm Sym}_4$-conjugate of $s$ is $G_0$-conjugate to an element in $\h_{\Pi_3}^\circ$ as in Table \ref{tabW}; since the possible nilpotent parts $n_{3,1}$ and $n_{3,2}$ lie in different families $D_k$, they are not $G_0$-conjugate, thus, if  $x=s+n_{3,i}$, then each $\sigma(x)$ is  $G_0$-conjugate to an element $t+n_{3,i}$ with $t\in \h_{\Pi_3}^\circ$ as in Table \ref{tabW}, and we determine the possible transformations $s\to\sigma(s)\to t$ as in b). All the results are listed in Table \ref{tabMT}.
 \end{iprf}
\end{proof}

\renewcommand\arraystretch{1.2} 
\begin{table}\scalebox{0.86}{
    \begin{tabular}[ht]{rclr}
      \multicolumn{1}{c}{\bf element} & \multicolumn{1}{c}{\bf component}& \multicolumn{1}{c}{\bf conditions} & \multicolumn{1}{c}{\bf family}\\\hline

      $\lambda_1u_1+\lambda_2u_2+\lambda_3u_3+\lambda_4u_4$ & $\h_{\Pi_1}^\circ$ &with  $\lambda_1,\ldots,\lambda_4\ne 0$ and $\lambda_1 \notin \{\pm\lambda_2\pm\lambda_3\pm \lambda_4\}$ & $D_1$ \\&& up to the action of $PQ^i$ with $i\in\{0,1,2\}$ where  $Q$ as \\&& in Remark \ref{GammaConjugacy}d) and $P$ is any $4\times 4$ signed permutation matrix \\\hline

    $\lambda_1u_1+\lambda_2u_2+\lambda_3u_3$ & $\h_{\Pi_2}^\circ$ & with  $\lambda_1,\lambda_2,\lambda_3\ne 0$ and $\lambda_1 \notin \{\pm\lambda_2\pm\lambda_3\}$ & $D_1$ \\&& up to the action of $3\times 3$ signed permutation matrices\\\hline
            
$\lambda_1(u_1-u_2)+\lambda_2(u_1-u_3)$  & $\h_{\Pi_3}^\circ$ &with  $\lambda_1,\lambda_2\ne 0$ and $\lambda_1 \ne -\lambda_2$\\&& up to action of  $\langle \left(\begin{smallmatrix}0&1\\1&0\end{smallmatrix}\right),\left(\begin{smallmatrix} 1&1\\0&-1\end{smallmatrix}\right)\rangle\cong {\rm Dih}_6$ & $D_1$\\ \hline

   $\lambda_1u_1+\lambda_2u_4$  & $\h_{\Pi_4}^\circ$ &with  $\lambda_1,\lambda_2\ne 0$ and $\lambda_1 \ne \pm\lambda_2$\\&&  up to the action of  $\langle \left(\begin{smallmatrix} -1&0\\0&1\end{smallmatrix}\right),\left(\begin{smallmatrix}0&1\\1&0\end{smallmatrix}\right)\cong{\rm Dih}_4$& $D_1$\\\hline

$\lambda_1(u_1-u_4)$ & $\h_{\Pi_7}^\circ$& with $\lambda_1\ne 0$ up to the action of $\langle (-1)\rangle$ & $D_1$\\\hline

  $\lambda_1 u_1$ & $\h_{\Pi_{10}}^\circ$& with $\lambda_1\ne 0$ up to the action of $\langle (-1)\rangle$ & $D_1$\\\hline
                
\end{tabular}}\\[1ex]\caption{The classification of semisimple elements up to $\mathcal{S}$-conjugacy; the action of each matrix group is on the vector of parameters $(\lambda_1$), $(\lambda_1,\lambda_2)$, etc. The corresponding centraliser of an element in $\h_{\Pi_i}^\circ$ is given in Row $i$ of Table \ref{tabZ}. }\label{tabSSS}
\end{table}

\renewcommand\arraystretch{1.4} 
\begin{table}\scalebox{0.78}{
	\hspace*{0cm}\begin{tabular}[H]{cccc}
		{\bf element}  & {\bf fam.} & {\bf identity component $\pmb{ Z^\circ}$} & {\bf preimages of generators of $\pmb{Z/Z^\circ}$} \\\hline
		
		$s+n_{2,1}$ & $D_2$ & 1 & $ (-I,-I,I,I),(-I,I,-I,I),(L,L,L,L)$ \\\hline
		
		$s+n_{3,1}$ & $D_2$ & $\left\{ (L(a')^\intercal, L(a')^\intercal, L(a'), L(a')) \;:\; a'\in \C \right\}$ & $ (-I,-I,I,I),(-I,I,-I,I),(L,L,L,L)$  \\\hdashline
		
		$s+n_{3,2}$ & $D_4$ & 1 & $ (-I,-I,I,I),(-I,I,-I,I),(-I,-I,-I,-I)$  \\\hline
		
		$s+n_{4,1}$ & $D_3$ & 1 & $ (-I,-I,I,I),(-I,I,-I,I),(L,L,L,L)$   \\\hdashline
		
		$s+n_{4,3}$ & $D_2$ & $\left\{ (D(a)^{-1}, D(a), D(a)^{-1}, D(a)) \;:\; a\in \C^\times \right\}$ & $ (-I,-I,I,I),(-I,I,-I,I)$  \\\hline
		
		$s+n_{7,1}$ & $D_4$ & $\left\{ (L(a')^{-1}, L(a')^{-\intercal}, L(a')^\intercal, L(a')) \;:\; a'\in \C \right\}$ & $ (-I,-I,I,I),(-I,I,-I,I),(-I,-I,-I,-I)$  \\\hdashline
		
		$s+n_{7,2}$ & $D_5$ & 1 & $ (-I,-I,I,I),(-I,I,-I,I),(-I,-I,-I,-I)$  \\\hdashline
		
		$s+n_{7,4}$ & $D_3$ & $\left\{ (L(a'),L(a')^\intercal, D(b)^{-1}, D(b)) \;:\; a'\in \C, b\in \C^\times \right\}$ & $ (-I,-I,I,I),(-I,I,-I,I), (L,L,J,J)$  \\\hdashline
		
		$s+n_{7,6}$ & $D_2$ &  $(D(a^{-1},c'), D(a,c')^\intercal, D(a^{-1},b')^{\intercal}, D(a,b'))$ & $ (-I,-I,I,I),(-I,I,-I,I)$  \\ & & with  $a\in \C^\times$ and $b',c'\in\C$ \\\hline
		
		$s+n_{10,1}$ & $D_6$ & 1 & $ (-I,-I,I,I),(-I,I,-I,I),(L,L,L,L)$ \\\hdashline
		
		$s+n_{10,2}$ & $D_3$ & $\left\{ (D(a)^{-1}, D(a)^{-1}, D(a), D(a)) \;:\; a\in \C^\times \right\}$ & $ (-I,-I,I,I),(-I,I,-I,I)$  \\\hdashline
		
		$s+n_{10,5}$ & $D_2$ & $\left\{ (D(a)^{-1}, D(b)^{-1}, D(b), D(a)) \;:\; a,b\in \C^\times\right \}$ & $(-I,-I,I,I)$ \\\hdashline
		
\end{tabular}}\\[1ex]\caption{The classification of mixed elements up to $\mathcal{S}$-conjugacy; for each listed element $s+n_{i,j}$ the semisimple part $s\in \h_{\Pi_i}^\circ$ is as given in Table \ref{tabSSS}. Last two columns describe their centralisers, where the notation is from \eqref{eqmats}.}\label{tabMT} 
\end{table}

%\begin{table}\scalebox{0.88}{
% \begin{tabular}{l||l|ll|ll|llll}
% {\bf element} & $s+n_{2,1}$ &  $s+n_{3,1}$ &  $s+n_{3,2}$ &  $s+n_{4,1}$ &  $s+n_{4,3}$ &  $s+n_{7,1}$ &  $+n_{7,2}$ &  $s+n_{7,4}$ &  $s+n_{7,6}$ \\
%{\bf family} & $D_2$ & $D_2$ & $D_4$ & $D_3$ & $D_2$ & $D_4$ & $D_5$ & $D_3$ & $D_2$\\
%\multicolumn{10}{l}{{\color{white} hello}}\\
%{\bf element} &  \multicolumn{1}{l}{$s+n_{10,1}$} &  \multicolumn{1}{l}{$s+n_{10,2}$} &  $s+n_{10,5}$\\
%{\bf family} & \multicolumn{1}{l}{$D_6$} & \multicolumn{1}{l}{$D_3$} & $D_2$\\[1ex]
 
%\end{tabular}}\\[1ex]\caption{The classification of mixed elements up to $\mathcal{S}$-conjugacy; for each listed element $s+n_{i,j}$ the semisimple part $s\in \h_{\Pi_i}^\circ$ is as given in Table \ref{tabSSS}. Last column corresponds to their centralisers.}\label{tabMT}
%\end{table}

\renewcommand\arraystretch{1} 
\begin{table}[ht]\scalebox{0.88}{
\begin{minipage}[t]{8cm}\hspace*{-1.7cm}\begin{tabular}{rlc}
{\bf orbit}  & {\bf representative} & $\pmb{\mathcal{S}}${\bf-conjugate to} \\\hline
1 & $\mye{1100}$ & $N_2$ in $D_2$\\
2 & $\mye{1100}+\mye{0000}$ & $N_3$ in $D_3$\\
3 & $\mye{1100}+\mye{1001}$ & $N_3$ in $D_3$\\
4 & $\mye{1100}+\mye{1010}$ & $N_3$ in $D_3$\\
5 & $\mye{1101}+\mye{0100}$ & $N_3$ in $D_3$\\
6 & $\mye{1110}+\mye{0100}$ & $N_3$ in $D_3$\\
7 & $\mye{1110}+\mye{1101}$ & $N_3$ in $D_3$\\
8 & $\mye{1101}+\mye{0100}+\mye{1000}$ & $N_6$ in $D_6$\\
9 & $\mye{1110}+\mye{0100}+\mye{1000}$ & $N_6$ in $D_6$\\
10 & $\mye{1110}+\mye{1101}+\mye{1000}$ & $N_6$ in $D_6$\\
11 & $\mye{1110}+\mye{1101}+\mye{0100}$ & $N_6$ in $D_6$\\
12 & $\mye{0101}+\mye{1100}+\mye{1001}+\mye{0000}$ & $N_9$ in $D_9$\\
13 & $\mye{0110}+\mye{1100}+\mye{1010}+\mye{0000}$ & $N_9$ in $D_9$\\
14 & $\mye{1111}+\mye{1100}+\mye{1001}+\mye{1010}$ & $N_9$ in $D_9$\\
15 & $\mye{0111}+\mye{1110}+\mye{1101}+\mye{0100}$ & $N_9$ in $D_9$\\
16 & $\mye{1110}+\mye{1101}+\mye{0100}+\mye{1000}$ & $N_4$ in $D_4$\\
17 & $\mye{1110}+\mye{1101}+\mye{0000}$ & $N_5$ in $D_5$\\
18 & $\mye{1110}+\mye{0100}+\mye{1001}$ & $N_5$ in $D_5$\\
19 & $\mye{1101}+\mye{0100}+\mye{1010}$ & $N_5$ in $D_5$\\
20 & $\mye{0101}+\mye{1110}+\mye{1000}$ & $N_5$ in $D_5$\\
21 & $\mye{0110}+\mye{1101}+\mye{1000}$ & $N_5$ in $D_5$\\
22 & $\mye{1111}+\mye{0100}+\mye{1000}$ & $N_5$ in $D_5$\\
23 & $\mye{1110}+\mye{0100}+\mye{0000}+\mye{1001}$ & $N_8$ in $D_8$\\
24 & $\mye{0110}+\mye{1101}+\mye{1000}+\mye{0000}$ & $N_8$ in $D_8$\\
25 & $\mye{1111}+\mye{0100}+\mye{1000}+\mye{1001}$ & $N_8$ in $D_8$\\
26 & $\mye{1111}+\mye{0100}+\mye{1000}+\mye{1010}$ & $N_8$ in  $D_8$\\
27 & $\mye{0101}+\mye{1110}+\mye{0000}+\mye{1001}$ & $N_7$ in $D_7$\\
28 & $\mye{0110}+\mye{1101}+\mye{0000}+\mye{1010}$ & $N_7$ in $D_7$\\
29 & $\mye{1111}+\mye{0100}+\mye{1001}+\mye{1010}$ & $N_7$ in $D_7$\\
30 & $\mye{1111}+\mye{0110}+\mye{0101}+\mye{1000}$ & $N_7$ in $D_7$\\
31 & 0
\end{tabular}
\end{minipage}\hspace*{1cm}

\begin{minipage}[t]{5cm}\vspace*{-7.4cm}\begin{tabular}{cc}
{\bf element} $\pmb{n_{i,j}}$  & $\pmb{\mathcal{S}}${\bf-conjugate to} \\\hline
$n_{2,1}$ & $N_2$ in $D_2$\\
$n_{3,1}$ & $N_2$ in $D_2$\\
$n_{3,2}$ &  $N_4$ in $D_4$\\
$n_{4,1}$ &  $N_3$ in $D_3$\\
$n_{4,2}$ &  $N_3$ in $D_3$\\
$n_{4,3}$ &  $N_2$ in $D_2$\\
$n_{4,4}$ &  $N_2$ in $D_2$\\
$n_{7,1}$ &  $N_4$ in $D_4 $\\
$n_{7,2}$ &  $N_5$ in $D_5 $\\
$n_{7,3}$ &  $N_5$ in $D_5  $\\
$n_{7,4}$ &  $N_3$ in $D_3 $\\
$n_{7,5}$ &  $N_3$ in $D_3  $\\
$n_{7,6}$ &  $N_2$ in $D_2 $\\
$n_{10,1}$ &  $N_6$ in $D_6$\\
$n_{10,2}$ &  $N_3$ in $D_3$\\
$n_{10,3}$ &  $N_6$ in $D_6  $\\
$n_{10,4}$ &  $N_3$ in $D_3$\\
$n_{10,5}$ &  $N_2$ in $D_2$\\
$n_{10,6}$ &  $N_3$ in $D_3$\\
$n_{10,7}$ &  $N_6$ in $D_6  $\\
$n_{10,8}$ &  $N_3$ in $D_3$\\
$n_{10,9}$ &  $N_6$ in $D_6$\\
$n_{10,10}$ &  $N_3$ in $D_3$\\
$n_{10,11}$ &  $N_2$ in $D_2$\\
$n_{10,12}$ &  $N_3$ in $D_3$\\
$n_{10,13}$ &  $N_2$ in $D_2$\\
\end{tabular}
\end{minipage}
}\\[1ex]\caption{Complex nilpotent orbits (left) and nilpotent elements $n_{i,j}$ from  Theorem \ref{thmME} (right).}\label{tabNP}
\end{table}

\renewcommand\arraystretch{1.4} 
\begin{table}\scalebox{0.77}{
	\hspace*{0cm}\begin{tabular}[H]{cccc}
		{\bf fam.} & {\bf $\pmb{i}$}  & {\bf identity component $\pmb{ Z^\circ}$} & {\bf preimages of generators of $\pmb{Z/Z^\circ}$} \\\hline
		N2 & 1 & $\{(D(b^{-1}cd,a'), D(b,b'), D(c,c')^\intercal, D(d,d')^\intercal)$ &  \\
                   &   &  with $a',b',c',d'\in\C$ and $b,c,d\in\C^\times$      &    \\\hline
		
		N3 & 2 &  $\left\{ (B^{-\intercal}, B, D(d^{-1},c')^\intercal, D(d,d')^\intercal) \;:\; B\in \SL(2,\mathbb C), d\in \C^\times,c',d'\in\C \right\}$ & $(-I,I,-I,I)$  \\\hline
		
		N4 & 16 &  {$\left\{ (L(b'+c'+d')^{-1}, L(b'), L(c')^\intercal, L(d')^\intercal) \;:\; b',c',d'\in \C \right\}$} & $ (-I,-I,I,I),(-I,I,-I,I), (-L,L,L,L)$   \\\hline
		
		N5 & 17 &  $\left\{ (D(b)^{-1}, D(b), L(d')^{-\intercal}, L(d')^\intercal) \;:\; b\in \C^\times,d\in\C \right\}$ & $ (-I,I,-I,I),(-I,I,I,-I)$  \\\hline	
		
		N6 & 8 &  $(D(d^{-1},-(b'+d')), D(d^{-1},b'), D(d^{-1},c')^\intercal, D(d,d')^\intercal) $  & $ (-I,-I,I,I),(-I,I,-I,I)$  \\
                   &   &  with $d\in\C^\times$ and $b',c',d'\in\C$  \\\hline
		N7 & 27 & 1 & $ (-I,-I,I,I),(-I,I,-I,I),(-I,I,I,-I)$  \\\hline
		
		N8 & 23 &  $\left\{ (L(a'),I, L(a')^\intercal, L(a')^\intercal) \;:\; a'\in \C \right\}$ & $ (-I,-I,I,I),(-I,I,-I,I), (-I,I,I,-I)$  \\\hline
		
		N9 & 12 &  {$( M(c,d)^{-1}M(a,b)^{-1}, M(c,d), L(u)^\intercal, M(a,b))$}  &  $(-I,I,-I,I), (L,L,L,L)$ \\  & &
                with  $a,b,c,d,u\in \C$ and $a^2=1+b^2$ and $c^2=1+d^2$\\ \hline
\end{tabular}}\\[1ex]\caption{The centralisers $Z=Z_{\wG}(e)$ where $e$ is the representative of the nilpotent orbit labelled $i$ in  Table \ref{tabNP}; the notation is explained in \eqref{eqmats}.}\label{tabNilZ}  
\end{table}

 \section{Invariants}\label{secApps}
 \noindent The aim of this section is to describe the invariant ring $R=\mathbb{C}[\g_1]^{\widehat{G}}$.  Let $B=\{b_1,\ldots,b_{16}\}$ be the  basis of $\g_1$ such that
$b_1=\mye{1111}$, $b_2=\mye{1110}$, $b_3=\mye{1101}$, $b_4=\mye{1100}$,
\ldots in lexicographical ordering. Let $\C[\g_1]$ be the ring of polynomial
functions on $\g_1$. We identify $\C[\g_1]$ with the polynomial ring
$\C[x_1,\ldots,x_{16}]$ using the basis $B$, so $f\in
\C[x_1,\ldots,x_{16}]$ is identified with the polynomial function on $\g_1$
that maps  $p\in \g_1$ to $f(c_1,\ldots,c_{16})$ where $c_1,\ldots,c_{16}$ are
the coefficients of $p$ with respect to $B$.

The group $\widehat{G}$ acts on $\C[\g_1]$ by $g\cdot f (x) = f(g^{-1}\cdot x)$, and the invariant ring $\C[\g_1]^{\widehat{G}}$ consists of all polynomials
$f\in \C[\g_1]$ such that $g\cdot f =f$ for all $g\in\widehat{G}$. The
invariants are interesting in our context because they are invariant on
orbits. By a celebrated theorem of Hilbert, $\C[\g_1]$ is finitely generated.
Vinberg has proved
a generalization of Chevalley's restriction theorem,  see \cite[Theorem 7]{vinberg} or \cite[Theorem~3.62]{wallach}, showing that the
restriction map $\C[\g_1]^{\widehat{G}} \to \C[\h]^W$ is an isomorphism.
Moreover, the degrees of the generating invariants of $\C[\h]^W$ are known to be
$2,4,4,6$ (this can be read from Table \ref{tabMaxSp}
by setting $n=4$ in the fourth row and recalling the isomorphism
$\mathrm{SO}(4,\C)=\SL(2,\C)^2$). It follows that
$\C[\g_1]^{\widehat{G}}$ is generated by four homogeneous invariants of
degrees $2,4,4,6$. Formulas for generating invariants have been determined  by Luque \& Thibon \cite{inv}; in this reference they are denoted $H$, $L$, $M$, $D_{xt}$.
In Table~\ref{tabInv} we give their explicit form, where we write $D$
instead of $D_{xt}$. We have checked the
correctness of these expressions by computer in the following way:
Let $\C[\g_1]_k$ denote the space of homogeneous polynomials of degree $k$.
There is a canonical isomorphism of $\widehat{G}$-modules $\C[\g_1]_k \to
\mathrm{Sym}^k (\g_1^\ast)$, where $\g_1^\ast$ denotes the dual module of
$\g_1$. Under this isomorphism, every invariant spans a trivial 1-dimensional
submodule. For $k=2,4,4,6$ we determined the trivial 1-dimensional submodules
of $\mathrm{Sym}^k (\g_1^\ast)$ by linear algebra methods using the Lie algebra
of $\widehat{G}$; this way we found the same invariants as Luque \& Thibon. We now define \begin{eqnarray}\label{eqF}\mathcal{F}\colon \g_1\to \mathbb{C}^4,\quad \mathcal{F}(s)=(H(s),L(s),M(s),D(s)).
\end{eqnarray}For a 4-tuple $v\in\mathbb{C}^4$ denote by $U_v=\{s\in\g_1 : \mathcal{F}(s)=v\}$ the corresponding fibre of $\mathcal{F}$; all these fibres partition $\g_1$. Recall that $e\in \g_1$ is nilpotent if and only if there are $g_1,g_2,\ldots \in \wG$ with $\lim_{i\to \infty}g_i(e)=0$, see \cite[Proposition 1]{vinberg}. Since $\mathcal{F}$ is polynomial, this implies that $\mathcal{F}(0)=\lim_{i\to\infty} \mathcal{F}(g_i(e))=\mathcal{F}(e)$. In particular, if $p+e\in U_v$ is a mixed element, so $[p,e]=0$, then we can assume that each $g_i\in Z_{\wG}(p)$, and hence $\mathcal{F}(p+e)=\mathcal{F}(g_i(p+e))=\mathcal{F}(p+g_i(e))$, with limit $\mathcal{F}(p)=v$. By \cite[Theorem 3]{vinberg}, each fibre $U_v$
consists of a single semisimple orbit $\wG p$ (with $p\in \g_1$ semisimple such that $\mathcal{F}(p)=v$) along with the mixed orbits that have an element in $\wG p$ as their semisimple part; in particular, each fiber is the union of finitely many orbits, cf.\ \cite[Theorem 4]{vinberg}. The different values $\mathcal{F}(s)$ with $s\in\h_{\Pi_i}^\circ$ are listed in Table \ref{tabSSinv}. Furthermore, in Table \ref{tab:invrels} we list generators of the ideal of the polynomial
relations between these values. 

\enlargethispage{0.5cm}
 
\newcommand{\myminus}{$\text{-}$} % to avoid these f-ing linebreak when using a normal hyphen

%\setlength\LTleft{-2.1cm}
%\begin{landscape}
  % \scalebox{0.85}{\hspace*{-1.8cm}
\begin{table} \footnotesize 
\begin{tabular}{p{0.22cm}p{15.5cm}} 
{\bf pol.}\ & \hspace*{2ex}{\bf list of monomials} \\\hline      
$H$ & {\myminus}8.9,\;7.10,\;6.11,\;{\myminus}5.12,\;4.13,\;{\myminus}3.14,\;{\myminus}2.15,\;1.16\\

$L$ & 4.7.10.13,\;{\myminus}4.7.9.14,\;{\myminus}4.6.11.13,\;4.6.9.15,\;4.5.11.14,\;{\myminus}4.5.10.15,\;{\myminus}3.8.10.13,\;3.8.9.14,\;3.6.12.13,\;{\myminus}3.6.9.16,\;{\myminus}3.5.12.14,\;3.5.10.16,\;\linebreak 2.8.11.13,\;{\myminus}2.8.9.15,\;{\myminus}2.7.12.13,\;2.7.9.16,\;2.5.12.15,\;{\myminus}2.5.11.16,\;{\myminus}1.8.11.14,\;1.8.10.15,\;1.7.12.14,\;{\myminus}1.7.10.16,\;{\myminus}1.6.12.15,\;1.6.11.16\\

$M$ & {\myminus}6.7.10.11,\;6.7.9.12,\;5.8.10.11,\;{\myminus}5.8.9.12,\;4.6.11.13,\;{\myminus}4.6.9.15,\;{\myminus}4.5.11.14,\;4.5.9.16,\;{\myminus}3.6.12.13,\;3.6.10.15,\;3.5.12.14,\;{\myminus}3.5.10.16,\;\linebreak {\myminus}2.8.11.13,\;2.8.9.15,\;2.7.11.14,\;{\myminus}2.7.9.16,\;{\myminus}2.3.14.15,\;2.3.13.16,\;1.8.12.13,\;{\myminus}1.8.10.15,\;{\myminus}17.12.14,\;1.7.10.16,\;1.4.14.15,\;{\myminus}1.4.13.16\\

$D$ & {\myminus}4.6.8.9.11.13,\;4.6.8.9.9.15,\;{\myminus}4.6.7.10.11.13,\;4.6.7.9.12.13,\;4.6.7.9.11.14,\;{\myminus}4.6.7.9.9.16,\;4.6.6.11.11.13,\;{\myminus}4.6.6.9.11.15,\;4.5.8.10.11.13,\;\linebreak {\myminus}4.5.8.9.10.15,\;{\myminus}4.5.7.9.12.14,\;4.5.7.9.10.16,\;{\myminus}4.5.6.11.12.13,\;{\myminus}4.5.6.11.11.14,\;4.5.6.10.11.15,\;4.5.6.9.11.16,\;4.5.5.11.12.14,\;\linebreak {\myminus}4.5.5.10.11.16,\;4.4.6.11.13.13,\;{\myminus}4.4.6.9.13.15,\;{\myminus}4.4.5.11.13.14,\;4.4.5.9.14.15,\;3.6.8.10.11.13,\;{\myminus}3.6.8.9.10.15,\;{\myminus}3.6.7.9.12.14,\;\linebreak3.6.7.9.10.16,\;{\myminus}3.6.6.11.12.13,\;3.6.6.9.12.15,\;{\myminus}3.5.8.10.12.13,\;{\myminus}3.5.8.10.11.14,\;3.5.8.10.10.15,\;3.5.8.9.12.14,\;3.5.7.10.12.14,\;\linebreak {\myminus}3.5.7.10.10.16,\;3.5.6.12.12.13,\;3.5.6.11.12.14,\;{\myminus}3.5.6.10.12.15,\;{\myminus}3.5.6.9.12.16,\;{\myminus}3.5.5.12.12.14,\;3.5.5.10.12.16,\;{\myminus}3.4.6.12.13.13,\;\linebreak{\myminus}3.4.6.11.13.14,\;3.4.6.10.13.15,\;3.4.6.9.13.16,\;3.4.5.12.13.14,\;3.4.5.11.14.14,\;{\myminus}3.4.5.10.14.15,\;{\myminus}3.4.5.9.14.16,\;3.3.6.12.13.14,\;\linebreak{\myminus}3.3.6.10.13.16,\;{\myminus}3.3.5.12.14.14,\;3.3.5.10.14.16,\;2.8.8.9.11.13,\;{\myminus}2.8.8.9.9.15,\;{\myminus}2.7.8.9.12.13,\;{\myminus}2.7.8.9.11.14,\;2.7.8.9.10.15,\;2.7.8.9.9.16,\;\linebreak 2.7.7.9.12.14,\;{\myminus}2.7.7.9.10.16,\;{\myminus}2.6.8.11.11.13,\;2.6.8.9.11.15,\;2.6.7.11.12.13,\;{\myminus}2.6.7.9.12.15,\;2.5.8.11.11.14,\;{\myminus}2.5.8.10.11.15,\;\linebreak2.5.8.9.12.15,\;{\myminus}2.5.8.9.11.16,\;{\myminus}2.5.7.11.12.14,\;2.5.7.10.11.16,\;{\myminus}2.4.8.11.13.13,\;2.4.8.9.13.15,\;2.4.7.11.13.14,\;{\myminus}2.4.7.9.14.15,\;\linebreak{\myminus}2.4.6.11.13.15,\;2.4.6.9.15.15,\;2.4.5.11.13.16,\;{\myminus}2.4.5.9.15.16,\;2.3.8.12.13.13,\;{\myminus}2.3.8.10.13.15,\;2.3.8.9.14.15,\;{\myminus}2.3.8.9.13.16,\;\linebreak{\myminus}2.3.7.12.13.14,\;2.3.7.10.13.16,\;2.3.6.11.13.16,\;{\myminus}2.3.6.9.15.16,\;2.3.5.12.14.15,\;{\myminus}2.3.5.12.13.16,\;{\myminus}2.3.5.11.14.16,\;2.3.5.9.16.16,\;\linebreak2.2.8.11.13.15,\;{\myminus}2.2.8.9.15.15,\;{\myminus}2.2.7.11.13.16,\;2.2.7.9.15.16,\;{\myminus}1.8.8.10.11.13,\;1.8.8.9.10.15,\;1.7.8.10.12.13,\;1.7.8.10.11.14,\;\linebreak{\myminus}1.7.8.10.10.15,\;{\myminus}1.7.8.9.10.16,\;{\myminus}1.7.7.10.12.14,\;1.7.7.10.10.16,\;1.6.8.11.12.13,\;{\myminus}1.6.8.9.12.15,\;{\myminus}1.6.7.12.12.13,\;1.6.7.10.12.15,\;\linebreak {\myminus}1.6.7.10.11.16,\;1.6.7.9.12.16,\;{\myminus}1.5.8.11.12.14,\;1.5.8.10.11.16,\;1.5.7.12.12.14,\;{\myminus}1.5.7.10.12.16,\;1.4.8.11.13.14,\;{\myminus}1.4.8.9.14.15,\;\linebreak{\myminus}1.4.7.11.14.14,\;1.4.7.10.14.15,\;{\myminus}1.4.7.10.13.16,\;1.4.7.9.14.16,\;1.4.6.12.13.15,\;1.4.6.11.14.15,\;{\myminus}1.4.6.11.13.16,\;{\myminus}1.4.6.10.15.15,\;\linebreak{\myminus}1.4.5.12.14.15,\;1.4.5.10.15.16,\;{\myminus}1.3.8.12.13.14,\;1.3.8.10.13.16,\;1.3.7.12.14.14,\;{\myminus}1.3.7.10.14.16,\;{\myminus}1.3.6.12.14.15,\;1.3.6.10.15.16,\;\linebreak1.3.5.12.14.16,\;{\myminus}1.3.5.10.16.16,\;{\myminus}1.2.8.12.13.15,\;{\myminus}1.2.8.11.14.15,\;1.2.8.10.15.15,\;1.2.8.9.15.16,\;1.2.7.12.13.16,\;1.2.7.11.14.16,\;\linebreak{\myminus}1.2.7.10.15.16,\;{\myminus}1.2.7.9.16.16,\;1.1.8.12.14.15,\;{\myminus}1.1.8.10.15.16,\;{\myminus}1.1.7.12.14.16,\;1.1.7.10.16.16
\end{tabular}\caption{Generators of  $\mathbb{C}[\g_1]^{\widehat{G}}$: each generator is the sum of the listed monomials, where $i_1.i_2.i_3\ldots $ stands for $x_{i_1} x_{i_2} x_{i_3}\ldots$, and $-i_1.i_2.i_3\ldots$ represents  $-x_{i_1} x_{i_2} x_{i_3}\ldots$}\label{tabInv}
\end{table}

\begin{table}\footnotesize \renewcommand\arraystretch{1.2}
\scalebox{1.13}{\begin{tabular}{rl}
  $\pmb{i}$ & {\bf invariant values $\pmb{\mathcal{F}(s)}$}\\\hline
1&  $(\lambda_1^2+\lambda_2^2+\lambda_3^2+\lambda_4^2,
  -\lambda_1^2 \lambda_2^2+\lambda_1^2 \lambda_3^2+\lambda_2^2 \lambda_4^2-\lambda_3^2 \lambda_4^2,
   \lambda_1^2 \lambda_2^2-\lambda_1^2 \lambda_4^2-\lambda_2^2 \lambda_3^2+\lambda_3^2 \lambda_4^2,$\\
   & \hspace*{1ex}$\lambda_1^4 \lambda_2^2+\lambda_1^2 \lambda_2^4-\lambda_1^2 \lambda_2^2 \lambda_3^2-\lambda_1^2 \lambda_2^2 \lambda_4^2-\lambda_1^2 \lambda_3^2 \lambda_4^2-\lambda_2^2 \lambda_3^2 \lambda_4^2+\lambda_3^4 \lambda_4^2+\lambda_3^2 \lambda_4^4)$\\
2&  $(\lambda_1^2+\lambda_2^2+\lambda_3^2, -\lambda_1^2 \lambda_2^2+\lambda_1^2 \lambda_3^2, \lambda_1^2 \lambda_2^2-\lambda_2^2 \lambda_3^2, \lambda_1^4 \lambda_2^2+\lambda_1^2 \lambda_2^4-\lambda_1^2 \lambda_2^2 \lambda_3^2)$\\
3&   $(2 \lambda_1^2+2 \lambda_1 \lambda_2+2 \lambda_2^2, -\lambda_1^4-2 \lambda_1^3 \lambda_2+2 \lambda_1 \lambda_2^3+\lambda_2^4, \lambda_1^4+2 \lambda_1^3 \lambda_2, 2 \lambda_1^6+6 \lambda_1^5 \lambda_2+6 \lambda_1^4 \lambda_2^2+2 \lambda_1^3 \lambda_2^3)$\\
4&  $( \lambda_1^2+\lambda_2^2, 0, -\lambda_1^2 \lambda_2^2, 0)$\\ 
5&  $( \lambda_1^2+\lambda_2^2, \lambda_1^2 \lambda_2^2, 0, 0)$ \\
6&  $( \lambda_1^2+\lambda_2^2, -\lambda_1^2 \lambda_2^2, \lambda_1^2 \lambda_2^2, \lambda_1^4 \lambda_2^2+\lambda_1^2 \lambda_2^4)$, \\
7&  $( 2 \lambda_1^2, 0, -\lambda_1^4, 0)$\\ 
8&  $( 2 \lambda_1^2, \lambda_1^4, 0, 0)$\\
9&  $( 2 \lambda_1^2, -\lambda_1^4, \lambda_1^4, 2 \lambda_1^6)$ \\ 
10& $( \lambda_1^2, 0, 0, 0) $
\end{tabular}} \caption{Invariant values $\mathcal{F}(s)$ for $s\in\h_{\Pi_i}^\circ$ with parameters $\lambda_1,\ldots,\lambda_4$ as in Table \ref{tabW}.}\label{tabSSinv}
\end{table}

\begin{table}\footnotesize \renewcommand\arraystretch{1.2}
\begin{tabular}{rl}
  $\pmb{i}$ & {\bf relations}\\\hline
2 & $H^5LMD-H^4L^2M^2-H^4LD^2+H^4MD^2-8H^3L^2MD+8H^3LM^2D+8H^2L^3M^2-8H^2L^2M^3$\\
& $-H^3D^3+8H^2L^2D^2-46H^2LMD^2+8H^2M^2D^2+16HL^3MD+64HL^2M^2D+16HLM^3D-16L^4M^2$\\& $-32L^3M^3-16L^2M^4+36HLD^3-36HMD^3-16L^3D^2-24L^2MD^2+24LM^2D^2+
16M^3D^2+27D^4$\\
3 & $H^3D-2H^2LM-4HLD+4HMD+8L^2M-8LM^2-18D^2$, \\
& $H^4-8H^2L+8H^2M-24HD+16L^2+16LM+16M^2$\\
4 & $D$,\; $L$\\
5 & $D$,\; $M$\\
6 & $L+M$,\; $HM-D$\\
7 &  $D$,\; $L$,\; $H^2+4M$\\
8 &  $D$,\; $M$,\; $H^2-4L$\\
9 & $M^3-\tfrac{1}{4}D^2$,\; $L+M$,\; $HD-4M^2$,\; $HM-D$,\; $H^2-4M$\\
10 & $D$, \; $M$,\; $L$\\
\end{tabular} \caption{Generators of the ideal of the polynomial relations between the evaluated invariants for the $i$-th semisimple family. The elements $H,L,M,D$ correspond to the  first to fourth entries in the coordinate vector given in the corresponding row of Table \ref{tabSSinv}.}\label{tab:invrels}
\end{table}

\enlargethispage{1cm}
%%%%%%%%%%%%%
\bibliographystyle{abbrv}

\end{document}